\documentclass[aps,pra,twocolumn,amsfonts,amssymb,amsmath,showpacs,
floatfix,nofootinbib,citesort]{revtex4-2}
\usepackage{mathrsfs}
\usepackage{amsfonts}
\usepackage{amstext}
\usepackage{amsmath}
\usepackage{amssymb}
\usepackage{bm}
\usepackage{CJK}
\usepackage{bbm}
\usepackage[dvips]{graphicx}
\def\qed{\leavevmode\unskip\penalty9999 \hbox{}\nobreak\hfill
	\quad\hbox{\leavevmode  \hbox to.77778em{%
			\hfil\vrule   \vbox to.675em%
			{\hrule width.6em\vfil\hrule}\vrule\hfil}}
	\par\vskip3pt}

\usepackage{amssymb}
\usepackage{graphicx}
\usepackage{graphics}
\usepackage{amsmath}
\usepackage{amsthm}
\usepackage{color}
\usepackage{dsfont}
\usepackage{textcomp}
\definecolor{darkred}  {rgb}{0.5,0,0}
\definecolor{darkblue} {rgb}{0,0,0.5}
\definecolor{darkgreen}{rgb}{0,0.5,0}
\usepackage{hyperref}
\hypersetup{
	pdftitle = {QRT Proposal},
	pdfauthor = {},
	colorlinks = true,
	urlcolor  = blue,         
	linkcolor = red,     
	citecolor = blue,    
	filecolor = darkred       
}
\usepackage{mathtools}
\def\ra{\rangle}
\def\la{\langle}
\def\bb{\mathbb}

\newtheorem{theorem}{Theorem}

\newtheorem{lemma}{Lemma}

\newcommand{\bea}{\begin{eqnarray}}
	\newcommand{\eea}{\end{eqnarray}}
\newcommand{\be}{\begin{equation}}
	\newcommand{\ee}{\end{equation}}
\newcommand{\ba}{\begin{equation}\begin{aligned}}
		\newcommand{\ea}{\end{aligned}\end{equation}}

\newcommand{\beax}{\begin{eqnarray*}}
	\newcommand{\eeax}{\end{eqnarray*}}
\newcommand{\bex}{\begin{equation*}}
	\newcommand{\eex}{\end{equation*}}

\newtheorem{definition}{Definition}
\theoremstyle{remark}

\newtheorem{example}{Example}

\def\be{\begin{equation}}
	\def\ee{\end{equation}}

\newcommand{\mE}{\mathcal{E}}

\newcommand{\mR}{\mathcal{R}}

\newcommand{\mI}{\mathcal{I}}
\newcommand{\mH}{\mathcal{H}}

\newcommand{\mM}{\mathcal{M}}

\newcommand{\mD}{\mathcal{D}}

\newcommand{\mS}{\mathcal{S}}

\newcommand{\tr}{{\rm Tr}}

\newcommand{\mbR}{\mathbb{R}}

\newcommand{\rIm}{{\rm Im}}
\newcommand{\rRe}{{\rm Re}}

\begin{document}
	

\preprint{APS/123-QED}
\begin{CJK*}{GB}{gbsn}
\title{Imaginarity witness\\}
		

\author{Zhiqi Liang}
\author{Yufan Lin}		
\author{Yu Guo}
\email{guoyu3@aliyun.com}
\author{Yuqi Sun}

\affiliation{School of Mathematical Sciences, Inner Mongolia University, Hohhot, Inner Mongolia 010021, People's Republic of China}
\affiliation{Inner Mongolia Key Laboratory of Mathematical Modeling and Scientific Computing, Inner Mongolia University, Hohhot, Inner Mongolia 010021, People's Republic of China}

		
\begin{abstract}
Imaginarity has shown to be an important resource in quantum information. The witness theory of quantum resource, such as entanglement witness, coherence witness, and imaginarity witness, has been established, in particular entanglement witness and coherence witness have been extensively explored. We present here another class of imaginarity witnesses beyond the one in [{Phys. Lett. A} \textbf{530}, 130135 (2025)]. Within our framework, any nonreal Hermitian operator under the reference basis is an imaginarity witness and only finite of them can detect all the imaginarity states. As the common approaches that were explored in the witness theories of entanglement and coherence, we then explore the relations between these imaginarity witnesses in different cases: (i) when they can detect common imaginary states, (ii) when they can detect the same sets of imaginary states, and (ii) when they obey the finer relation. Finally, we define an imaginarity measure in terms of witnesses, termed witnessed imaginarity, and prove that it coincides with both the trace norm of imaginarity and the robustness of imaginarity.

\end{abstract}

\maketitle
\end{CJK*}


\section{Introduction}


In classical physics, complex numbers are often used to simplify models of oscillatory motion and wave dynamics~\cite{Smith W F.2010,Smith W F.2013,Tatzko S2019}. It seems that they play a much deeper role in quantum physics since it has been demonstrated to offer advantages in many quantum information tasks, such as state discrimination~\cite{WuKD2021prl,Herzog2002pra}, hiding and masking~\cite{ZhuH2021prr}, quantum machine learning~\cite{Sajjan2023prr}, pseudorandomness~\cite{Koh2023arxiv}, multiparameter quantum metrology~\cite{Carollo2018srep,Carollo2019jsm,Miyazaki2022quantum}, linear-optical implementations~\cite{Jones2023pra,Menssen2017prl,Shchesnovich2018pra},
Kirkwood--Dirac quasiprobabilities~\cite{Wagner2024qst,Budiyono2023pra1,Budiyono2023pra2,Budiyono2023jmp},
and weak-value amplification and related phenomena~\cite{Wagner2023pra,Kedem2012pra,Dixon2009prl,Hosten2008science,Brunner2010prl,Hofmann2011pra,Kunjwal2019pra}.
These results indicate that imaginarity is indispensable in quantum theory.

Ever since Hickey and Gour~\cite{HickeyGour2018} introduced the resource theory of imaginarity in 2018, in which quantum states with complex matrix elements in a fixed basis are considered resources, the resource theory of imaginarity has received extensive attention. To characterize the imaginarity of a quantum state, one typically requires full knowledge of its density matrix elements. But, in experiments, this information is usually obtained via quantum state tomography, which is a bureaucratic procedure. This raised the studies on the imaginarity witness, which was first introduced in~\cite{Li N 2025 pla}. A Hermitian operator $W$ is called an imaginarity witness if (i) $\tr(W\delta)\geq0$ for any real state $\delta$, and (ii) there exists an imaginary state $\rho$ such that $\tr(W\rho)<0$. Another question that has drawn much attention in the resource theory of imaginarity is how to quantify imaginarity. Various imaginarity measures have been proposed so far, including the contractive metric induced imaginarity measure~\cite{HickeyGour2018}, the robustness of imaginarity~\cite{HickeyGour2018}, the geometric measure of imaginarity~\cite{WuKD2021pra}, the relative entropy of imaginarity~\cite{Xue2021qip}, the weight of imaginarity~\cite{Xue2021qip}, and the Tsallis relative entropy of imaginarity~\cite{Xu J 2024pla}, etc.

Indeed, witnesses have been developed extensively in convex quantum resource theory, such as entanglement witnesses~\cite{Wu Y C2006pla,Hou J2010pra,Lewenstein2000pra,Wu Y Cl2007pra,Hansen2015 ijqi,Brandao2005pra,Qi2012} and coherence witnesses~\cite{Napoli2016prl,Wang B H2021Entropy,Li M S2024pra,Ren H2017alp,Zhu X N2024jpa,Wang2021qip,Z. Ma2021}. In the context of coherence witnesses, Li \emph{et al.} constructed in~\cite{Li M S2024pra} a series of coherence witnesses in the light of the traces of observables, which was shown to be stronger than the previous coherence witnesses~\cite{Napoli2016prl}. Ref.~\cite{Zhu X N2024jpa} discussed coherence witnesses based on the maximum and minimum diagonal elements of given observables and showed that these witnesses can detect all the coherence in a more general manner. This motivates us to improve the construction of imaginarity witnesses. Consequently, we propose a new class of imaginarity witnesses which can detect the imaginary by the minimal and maximal eigenvalues of the real part of the witness. Moreover, we show that these witnesses indeed enhance the ability to detect imaginarity and finite these imaginarity witnesses can detect all imaginarity in the systems.

Another issue along this line is analyzing the relationship between different witnesses, which have been explored extensively for the case of entanglement and coherence~\cite{Wu Y C2006pla,Hou J2010pra,Lewenstein2000pra,Wu Y Cl2007pra,Hansen2015 ijqi,Brandao2005pra,Qi2012,Napoli2016prl,Wang B H2021Entropy,Li M S2024pra,Ren H2017alp,Zhu X N2024jpa,Wang2021qip,Z. Ma2021}. In Refs.~\cite{Wu Y C2006pla,Hou J2010pra,Lewenstein2000pra,Wu Y Cl2007pra,Wang B H2021Entropy,Li M S2024pra,Wang2021qip}, the condition for different witnesses can detect some common states, and the condition for two witnesses can detect the same states, etc, have been investigated. We thus discuss such an issue for imaginarity witness we proposed. On the other hand, the resource witnesses can induce corresponding measure. For instance, the witnessed coherence coincides with the $l_1$-norm of coherence.~\cite{Ren H2017alp}; the $l_1$-norm of imaginarity can be represented by imaginarity witnesses~\cite{Li N 2025 pla}. As one may expect, we give an imaginarity measure by means of the imaginarity witnesses and prove that it coincides with both the trace norm of imaginarity and the robustness of imaginarity.

The rest of this paper is organized as follows. In Sec.~\ref{sec2}, we review the resource theory of imaginarity and two imaginarity measures that are closely related with our work. In Sec.~\ref{sec3}, we introduce a new type of imaginarity witnesses and prove that only finite such witnesses can detect all the imaginary states. We analyze in Sec.~\ref{sec5} the condition under which different witnesses can detect some common imaginary states. In Sec.~\ref{sec6}, we derive the condition for two witnesses that can detect exactly the same set of imaginary states. Sec.~\ref{sec7} investigates the finer relation of witnesses. Sec.~\ref{sec8} discusses the witnessed imaginarity. In particular, we find that it coincides with both the trace norm of imaginarity and the robustness of imaginarity. Sec.~\ref{sec9} contains some conclusions.


\section{Preliminaries}\label{sec2}
	

Throughout this paper, we denote by $\mS$ the set of all quantum states acting on the system described by the Hilbert space $\mH$ with $\dim\mH=d\geqslant2$. Recall that, the free states in resource theory of imaginarity are {\it real states}, i.e., states with a real density matrix $\langle i |\rho |j\rangle\in{\mathbb R}$ under a given reference basis $\{|i\ra\}_{m=0}^{d-1}$ of $\mathcal H$. 
Let $\mR$ denote the set of real states with respect to the reference basis. All the states in $\mS\setminus\mR$ are referred to as imaginary states. We denote by
\bea
\rRe(\rho):= \frac{\rho+\rho^T}{2}\ \text{and}\ \rIm(\rho):= \frac{\rho-\rho^T}{2\rm i}
\eea
the real part and the imaginary part of the quantum state $\rho$ under the reference basis, respectively. Here $\rm i:=\sqrt{-1}$ is the imaginary unit and $\rho^T:=\sum_ {i,j=1}^{d}\la{j}|\rho|i\ra|i\ra\la{j}|$ denotes the transpose of $\rho$ with respect to the reference basis.

We denote by $\mathscr{H}$ the set of all Hermitian operator acting on $\mH$. An imaginarity witness~\cite{Li N 2025 pla} is a Hermitian operator $W\in\mathscr{H}$ satisfying (i) $\tr(W\delta)\geq0$ for all real states $\delta\in\mR$, and (ii) there exists an imaginary state $\rho$ such that $\tr(W\rho)<0$. $W$ is an imaginarity witness if and only if $\rRe(W)\geq0$ under the reference basis and it has some negative eigenvalue. Hereafter, we assume that all the operations are taken under this reference basis. Another imaginarity witnesses, called stringent imaginarity witnesses, strengthens the condition (i) to $\tr(W\delta)=0$ for all $\delta\in\mR$ and relaxes the condition (ii) to $\tr(W\rho)\neq0$ for some state $\rho$. Such a class of witnesses can also detect all imaginary states~\cite{Li N 2025 pla}. A imaginarity witness $W$ is a stringent imaginarity witness if and only if $\rRe(W)=0$.

The corresponding free operation is called {\it real operation}, which can be represented as $\mE(\rho) =\sum_j K_j\rho K_j^{\dag}$ with the
Kraus operators $K_j$'s satisfy $\langle m|K_j|n\rangle\in{\mathbb R}$. A nonnegative function ${\mathcal I}$ on $\mS(d)$ is
called an imaginarity measure if it admits the following conditions (I1) to (I4)~\cite{WuKD2021prl,HickeyGour2018,WuKD2021pra}. 

(I1) Non-negativity: $\mathcal I (\rho) \geqslant 0$, and $\mathcal I (\rho) = 0$ for any real state $\rho$. 

(I2) Monotonicity: $\mathcal I (\mE(\rho))\leqslant  \mathcal I (\rho)$ whenever $\mE$ is a real operation. 

(I3) Probabilistic monotonicity: $\sum_j p_j{\mathcal I}(\rho_j)\leqslant {\mathcal I}(\rho),$ where $p_j=\tr (K_j\rho K_j^{\dag})$, $\rho_j=\frac {1}{p_j}K_j\rho K_j^{\dag}$, for all $\{K_j\}$ with $\sum_j K_j^{\dag}K_j=I$ and $K_j$ are real operators. 

(I4) Convexity: ${\mathcal I}(\sum_jp_j\rho_j)\leqslant \sum_jp_j{\mathcal I}(\rho_j)$ for any ensemble $\{p_j,\rho_j\}$. \\
Note that (I3) and (I4) together imply (I2). In Ref.~\cite{Xue2021qip}, another condition was discussed, i.e., 

(I5) additivity for direct sum states: ${\mathcal I}(p\rho_1 \oplus (1-p)\rho_2) = p{\mathcal I}(\rho_1)+(1-p){\mathcal I}(\rho_2)$,~$0<p<1$. (I3) and (I4) are equivalent to (I2) and (I5)~\cite{Xue2021qip}.

We introduce two imaginarity measures that are particularly pertinent to our work. For any state $\rho$, the trace norm of imaginarity~\cite{HickeyGour2018} was defined as  
\bea
\mI_{tr}(\rho)=\min_{\delta\in\mR}\|\rho-\delta\|_{\tr}=\frac{1}{2}\|\rho-\rho^T\|_{\tr}, 
\eea
where $\|A\|_{\tr}=\tr\sqrt{A^\dagger A}$. The robustness of imaginarity~\cite{HickeyGour2018,WuKD2021pra} was defined by
\bea
\mI_{R}(\rho)=\min_\tau\left\lbrace s\geq0\ \big |\frac{\rho+s\tau}{1+s}\in\mR\right\rbrace =\|\rIm(\rho)\|_{\tr}.
\eea

\section{Imaginarity witnesses based on the maximal-minimal eigenvalues of its real part} \label{sec3}

Note that the imaginarity witnesses in Ref.~\cite{Li N 2025 pla} are based on the non-negative real part of witness $W$. 
Analogous to the coherence witness based on the maximal-minimal diagonal elements of the observable~\cite{Zhu X N2024jpa}, here we introduce a new type of imaginarity witnesses based on the maximum and minimum eigenvalues of $\rRe(W)$. We give the following Lemma first.

\begin{lemma}
	For any $W\in\mathscr{H}$ with $\lambda_{\max}(\rRe(W))=M$ and $\lambda_{\min}(\rRe(W))=m$, we have $\tr(W\delta)\in[m,M]$ for any $\delta\in\mR$.
\end{lemma}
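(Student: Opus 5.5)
The plan is to reduce $\tr(W\delta)$ to $\tr(\rRe(W)\delta)$ for real $\delta$, and then apply the standard Rayleigh-quotient bound for density matrices.

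\emph{Step 1 (reduction).} Write $W=\rRe(W)+{\rm i}\,\rIm(W)$. Since $W$ is Hermitian, $W^T=\overline{W}$. Hence $\rRe(W)$ is real symmetric and $\rIm(W)=(W-W^T)/(2{\rm i})$ is real antisymmetric; both are Hermitian.

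For $\delta\in\mR$ the density matrix is real and Hermitian, so $\delta^T=\delta$. Transposition preserves the trace and $\tr(AB)^T=\tr(B^TA^T)$. Therefore
\[
\tr(\rIm(W)\delta)=\tr\big((\rIm(W)\delta)^T\big)=\tr(\delta\,\rIm(W)^T)=-\tr(\rIm(W)\delta),
\]
so $\tr(\rIm(W)\delta)=0$. It follows that $\tr(W\delta)=\tr(\rRe(W)\delta)$. One can also see this directly: $\tr(W\delta)=\tr(W^T\delta^T)=\tr(W^T\delta)$, and averaging the two expressions gives $\tr(\rRe(W)\delta)$.

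\emph{Step 2 (eigenvalue bound).} Diagonalize $\delta=\sum_k p_k|\psi_k\ra\la\psi_k|$ with $p_k\ge0$ and $\sum_k p_k=1$. Then
\[
\tr(\rRe(W)\delta)=\sum_k p_k\la\psi_k|\rRe(W)|\psi_k\ra .
\]
By the Rayleigh quotient for the Hermitian operator $\rRe(W)$, each term $\la\psi_k|\rRe(W)|\psi_k\ra$ lies in $[m,M]$. A convex combination of numbers in $[m,M]$ stays in $[m,M]$, which proves the lemma.

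Equivalently, $mI\le\rRe(W)\le MI$ in the operator order, and pairing with the positive operator $\delta$ of unit trace gives $m\le\tr(\rRe(W)\delta)\le M$.

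No step is a genuine obstacle. The only point needing care is Step 1, namely checking that the imaginary part of $W$ contributes nothing against real states. This rests on $\rIm(W)$ being antisymmetric and $\delta$ being symmetric, and on the trace of the product of a symmetric and an antisymmetric matrix vanishing.

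Finally, the bounds are attained. Take the real eigenvectors of the real symmetric matrix $\rRe(W)$ for $M$ and $m$; their projectors are real states, so $[m,M]$ is the exact range. This sharpness is not required for the lemma, but it is useful later.
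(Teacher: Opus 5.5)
Your proof is correct and takes the same route as the paper: reduce $\tr(W\delta)$ to $\tr(\rRe(W)\delta)$ for real $\delta$, then bound it by the extreme eigenvalues of $\rRe(W)$, with you supplying the details the paper leaves implicit. One harmless slip is the claim that both parts are Hermitian: $\rIm(W)=(W-W^T)/(2{\rm i})$ is real antisymmetric and therefore skew-Hermitian, and it is ${\rm i}\,\rIm(W)$ that is Hermitian; your argument only uses $\rIm(W)^T=-\rIm(W)$ and $\delta^T=\delta$, so nothing is affected.
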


\begin{proof}
	For any real state $\delta$, we have $\tr(W\delta)=\tr({\rRe}(W)\delta)$. Further,
	one has 
	\beax
	\lambda_{\min}(\rRe(W))\leq\tr(W\delta)=\tr(\rRe(W)\delta) \leq\lambda_{\max}(\rRe(W)),
	\eeax 
	hence proved.                                    
\end{proof}

That is, if $W\in\mathscr{H}$ with $\lambda_{\max}(\rRe(W))=M$ and $\lambda_{\min}(\rRe(W))=m$, then $\tr(W\rho)<m$ or $\tr(W\rho)>M$ implies that $\rho$ is an imaginary state. In fact, we have the following result.

\begin{theorem}\label{th1}
	If $m\leqslant M$, then, for any imaginary state $\rho$, there exists $W\in\mathscr{H}$ with $\rIm(W)\neq0$, $\lambda_{\min}(\rRe(W))=m$ and $\lambda_{\max}(\rRe(W))=M$, such that $\tr(W\rho)\notin [m, M]$.
\end{theorem}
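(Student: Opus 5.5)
Given an imaginary state $\rho$, its imaginary part $\rIm(\rho)$ is nonzero. It is a real antisymmetric matrix times $-\rm i$, so it is Hermitian and purely imaginary. The plan is to build $W$ as a real part plus an imaginary part, $W=A+\mu B$, chosen as follows:
\begin{itemize}
\item $A$ is a real symmetric matrix with spectrum in $[m,M]$ containing both $m$ and $M$;
\item $B=\rIm(\rho)$, which is Hermitian with $B^T=-B$;
\item $\mu>0$ is a scalar to be fixed later.
\end{itemize}
Then $\rRe(W)=A$ and $\rIm(W)=\mu\,\rIm(\rho)/{\rm i}\cdot{\rm i}$ up to the paper's conventions. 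The point is that $\mu B$ is the antisymmetric part, so $\rIm(W)\neq0$, while $\lambda_{\min}(\rRe(W))=m$ and $\lambda_{\max}(\rRe(W))=M$ hold automatically.

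For the real part, take $A=\mathrm{diag}(m,M,M,\dots,M)$, or any real diagonal matrix with those extreme entries; this works even when $m=M$, where $A=mI$. Now compute the pairing with $\rho=\rRe(\rho)+{\rm i}\,\rIm(\rho)$. Since $A$ is symmetric and $\rIm(\rho)$ is antisymmetric, cross terms such as $\tr(A\,\rIm(\rho))$ vanish: the trace of a symmetric times an antisymmetric matrix is zero. Likewise $\tr(B\,\rRe(\rho))=0$. This gives
\[
\tr(W\rho)=\tr(A\,\rRe(\rho))+\mu\,\tr\big(B\,{\rm i}\,\rIm(\rho)\big).
\]
I will orient $B$ correctly, taking $B=-{\rm i}\,\rIm(\rho)$ or its negative, so that $B$ is Hermitian and antisymmetric. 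The second term then equals $\mu\|\rIm(\rho)\|_2^2$, up to sign, and it is strictly positive because $\rho$ is imaginary.

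The first term lies in $[m,M]$ because $\rRe(\rho)$ is a state; it is positive with unit trace since $\rho^T$ is a state. Choosing $\mu>(M-m)/\|\rIm(\rho)\|_2^2$, or any $\mu>0$ when $m=M$, gives $\tr(W\rho)\ge m+\mu\|\rIm(\rho)\|_2^2>M$. Hence $\tr(W\rho)\notin[m,M]$. If one prefers the value to fall below $m$, simply flip the sign of $\mu$.

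The main obstacle is the bookkeeping of conventions. One must check that the antisymmetric Hermitian piece we add is exactly what the paper calls $\rIm(W)$, so that adding it leaves $\rRe(W)=A$ unchanged. One must also get the sign of the cross term $\tr(\rIm(W)\rIm(\rho))$ right, verifying that it equals a positive multiple of $\|\rIm(\rho)\|_2^2$ and not zero. Both reduce to the identities $\tr(S K)=0$ for $S$ symmetric and $K$ antisymmetric, and $\tr(XX^\dagger)>0$ for $X\ne0$.
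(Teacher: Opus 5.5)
Your argument is correct and is essentially the paper's. Both fix a real part with extreme eigenvalues $m,M$ and add a large multiple of a Hermitian antisymmetric term. The paper uses the single term $W_{k,l}$ at an entry with $\rIm(\rho_{kl})\neq0$ and tunes its coefficient so that $\tr(W\rho)=M+1$, whereas your $B$ is the weighted sum $\sum_{k<l}2\,\rIm(\rho_{kl})W_{k,l}$. One convention fix: with $\rIm(\rho)=(\rho-\rho^T)/(2{\rm i})$ the matrix $\rIm(\rho)$ is real antisymmetric, hence skew-Hermitian and not Hermitian, so the correct choice is $B={\rm i}\,\rIm(\rho)$, which gives $\tr(B\rho)=-\tr(\rIm(\rho)^2)=\|\rIm(\rho)\|_2^2>0$ and $\rIm(W)=\mu\,\rIm(\rho)\neq0$.
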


\begin{proof}
	For any imaginary state $\rho$, we assume $\rIm(\rho_{kl})=\rIm(\la k|\rho|l\ra)\neq0$ for some $k\not=l$. We let
	\bea 
	W_{k,l}:=\frac{\rm i}{2}(|k\ra\la l|-|l\ra\la k|).
	\eea 
	If $m=M$, we take $W=W_{k,l}+mI$, where $I$ is the identity operator on $\mH$. Clearly, $\tr(W\rho)=\rIm(\rho_{kl})+m\neq m$. If $m\neq M$, we take $W_0\in\mathscr{H}$ with $\lambda_{\min}(\rRe(W_0))=m$ and $\lambda_{\max}(\rRe(W_0))=M$. Let $W:=\frac{M+1-\tr[W_0\rho]}{\rIm(\rho_{kl})}W_{k,l}+W_0$, we have $\lambda_{\min}(\rRe(W))=m$, $\lambda_{\max}(\rRe(W))=M$ and $\tr(W\rho)=M+1>M$. This completes the proof.  	
\end{proof}

Theorem~\ref{th1} implies that, for any $m\leqslant M$, the Hermitian operators with with $\rIm(W)\neq0$, $\lambda_{\min}(\rRe(W))=m$ and $\lambda_{\max}(\rRe(W))=M$ are really imaginarity witnesses. We thus give the following new type of imaginarity witnesses.

\begin{definition}\label{D1}
	Let $W\in\mathscr{H}$ be a Hermitian operator with $\lambda_{\max}(\rRe(W))=M$ and $\lambda_{\min}(\rRe(W))=m$. We call $W$ an imaginarity witness if (i) for each real state $\delta$, $\tr(W\delta)\in[m,M]$, and (ii) there exists an imaginary state $\rho$, such that $\tr(W\rho)\notin[m,M]$.
\end{definition}

By definition, this family of imaginarity witnesses includes the ones in Ref.~\cite{Li N 2025 pla} as special case: If $m=0$, it is the general imaginary witness in~\cite{Li N 2025 pla}; if $m=M=0$, it is the stringent imaginarity witness in~\cite{Li N 2025 pla}. In Definition~\ref{D1}, $m$ and $M$ can be arbitrary real numbers with the only requirement is $m\leqslant M$.

We fix some notations for convenience. We denote by $\bb{W}_{m,M}$ the set of all imaginarity witnesses $W$'s with  $\lambda_{\max}(\rRe(W))=M$ and $\lambda_{\min}(\rRe(W))=m$, i.e.,
\be
\begin{aligned}
	\bb{W}_{m,M}&:=\{W\in\mathscr{H} \big|\lambda_{\max}(\rRe(W))=M, \ \lambda_{\min}(\rRe(W))\\
	&~~~~~=m, \exists\, \rho\in\mS,\  \text{s.t.}\ \tr(W\rho)\notin[m,M]\},
\end{aligned}
\ee
and for any $W\in \bb{W}_{m,M}$, we let
\beax
	\bb{I}_W&:=&\{\rho\in\mS|\tr(W\rho)\notin[m, M]\},\\
	\bb{I}_{W}^{m} &:=&\{\rho\in\mS\ |\ \tr(W\rho)<m\}, \\
	\bb{I}_{W}^{M} &:=&\{\rho\in\mS\ |\ \tr(W\rho)>M\}.
\eeax
Clearly, 
$\bb{I}_{W}=\bb{I}_{W}^{m}\bigcup\bb{I}_{W}^{M}$.

From the proof of Theorem~\ref{th1}, we can obtain that all imaginarity states can be detected by a set of finite imaginarity witnesses $\{W_i\}_{i=1}^n\subset\bb{W}_{m,M}$ when $M=m=a$ for any $a\in\mathbb{R}$. But for the case of $M\neq m$, any finite imaginarity witnesses $\{W_i\}_{i=1}^n\subset\bb{W}_{m,M}$ can not detect all the imaginarity states. We let $\rho_{\epsilon}=(1-\epsilon)\delta_0+\epsilon\rho'$, where $\delta_0=\frac{\rm I}{d}\in\mR$, $\rho'\in\mS\setminus\mR$, $0<\epsilon<1$. Then 
\beax
\tr(W_i\rho_{\epsilon})=(1-\epsilon)\tr(W_i\delta_0)+\epsilon\tr(W_i\rho')
\eeax
for any $i\in\{1,...,n\}$. Let $\tr(W_i\delta_0):=a_i$ and $K:=\max_i{\tr(W_i\rho')}$, and let
\bea
0<\epsilon<\min\left\lbrace \frac{a_i-m}{|K|+|a_i|+1},\frac{M-a_i}{|K|+|a_i|+1}\right\rbrace. 
\eea
Since
$-|\tr(W_i\rho')|-|a_i|\leq\tr(W_i\rho')-a_i\leq|\tr(W_i\rho')|+|a_i|$, we have $\epsilon<\frac{a_i-m}{|K|+|a_i|+1}<\frac{a_i-m}{|K|+|a_i|}$. Thus,
\beax
\tr(W_i\rho_{\epsilon})-m&=&\epsilon(\tr(W_i\rho')-a_i)+a_i-m\\&>&-(a_i-m)+a_i-m=0.
\eeax
Similarly, one can show that $M-\tr(W_i\rho_{\epsilon})>0$. However, $\rho_{\epsilon}$ is an imaginary state which can not be detected by any witness in $\{W_i\}_{i=1}^n$.

Going further, for the case of $M=m$, we discuss the minimal number of witnesses that are needed to detect all imaginary states in a qudit systems.

For the qubit system, if $M=m=a$, we let 
\bea
W&=\begin{pmatrix}
	a&a_{12}\rm i\\
	-a_{12}\rm i&a\\
\end{pmatrix},\  \ a, a_{12}\in\bb{R}, \ a_{12}\neq0.
\eea
Any qubit state can be written as $\rho=\frac{1}{2}(I+x\sigma_x+y\sigma_y+z\sigma_z)$, $x,y,z\in\bb{R}$. If $\rho$ is an imaginary state, $y\neq 0$. Then $\tr(W\rho)=a-a_{12}y\neq a$, so $\rho\in\bb{I}_W$.

For $3$-dimensional quantum system, we can construct witnesses as follows: 
\beax
W_1=\begin{pmatrix}
	a&a_{12}\rm i&0\\
	-a_{12}\rm i&a&0\\
	0&0&a\\
\end{pmatrix},~
W_2=\begin{pmatrix}
	a&0&a_{13}\rm i\\
	0&a&0\\
	-a_{13}\rm i&0&a\\
\end{pmatrix},
\eeax
\beax
W_3=\begin{pmatrix}
	a&0&0\\
	0&a&a_{23}\rm i\\
	0&-a_{23}\rm i&a\\
\end{pmatrix},
W_4=\begin{pmatrix}
	a&a_{12}\rm i&a_{13}\rm i\\
	-a_{12}\rm i&a&0\\
	-a_{13}\rm i&0&a\\
\end{pmatrix},
\eeax
\beax
W_5=\begin{pmatrix}
	a&a_{12}\rm i&0\\
	-a_{12}\rm i&a&a_{23}\rm i\\
	0&-a_{23}\rm i&a\\
\end{pmatrix},
W_6=\begin{pmatrix}
	a&0&a_{13}\rm i\\
	0&a&a_{23}\rm i\\
	-a_{13}\rm i&-a_{23}\rm i&a\\
\end{pmatrix},
\eeax
\beax		
W_7=\begin{pmatrix}
	a&a_{12}\rm i&a_{13}\rm i\\
	-a_{12}\rm i&a&a_{23}\rm i\\
	-a_{13}\rm i&-a_{23}\rm i&a\\
\end{pmatrix},	
\eeax  
where $a$, $a_{12}$, $a_{13}$, $a_{23}$ are any real number and $a_{12}a_{13}a_{23}\neq 0$. Any qutrit state can be expressed as $\rho=\frac{\rm I}{3}+\frac{1}{2}\sum_{k=1}^{8}b_kG_k$, where $G_k$s are then Gell-Mann matrices~\cite{Bertlmann2008jpa}, i.e,
\beax
G_1=\begin{pmatrix}
	0&1&0\\
	1&0&0\\
	0&0&0\\
\end{pmatrix},~
G_2=\begin{pmatrix}
	0&-\rm i&0\\
	\rm i&0&0\\
	0&0&0\\
\end{pmatrix},	
G_3=\begin{pmatrix}
	0&0&1\\
	0&0&0\\
	1&0&0\\
\end{pmatrix},
\eeax
\beax
G_4=\begin{pmatrix}
	0&0&-\rm i\\
	0&0&0\\
	\rm i&0&0\\
\end{pmatrix},	
G_5=\begin{pmatrix}
	0&0&0\\
	0&0&1\\
	0&1&0\\
\end{pmatrix},
G_6=\begin{pmatrix}
	0&0&0\\
	0&0&-\rm i\\
	0&\rm i&0\\
\end{pmatrix},
\eeax
\beax	
G_7=\begin{pmatrix}
	1&0&0\\
	0&-1&0\\
	0&0&0\\
\end{pmatrix},~
G_8=\frac{1}{\sqrt{3}}\begin{pmatrix}
	1&0&0\\
	0&1&0\\
	0&0&-2\\
\end{pmatrix},	
\eeax
and $b_k=\tr(\rho G_k)$, $k=1$, $2$, $\dots$, $8$. If a qutrit state $\rho$ is imaginary, at least one of $b_2$, $b_4$ and $b_6$ is nonzero. If $b_2\neq0$ and $b_4=b_6=0$, then $\tr(W_1\rho)=a-b_2a_{1,2}\neq a$, hence $W_1$ can detect $\rho$. $W_4$, $W_5$, and $W_7$ can also detect $\rho$. Similarly, one can easily check that:  if $b_4\neq0$ and $b_2=b_6=0$, then $W_2$, $W_4$, $W_6$, and $W_7$ are witnesses of $\rho$; if $b_6\neq0$ and $b_2=b_4=0$, then $W_3$, $W_5$, $W_6$, and $W_7$ are witnesses of $\rho$; if $b_2b_4\neq0$ and $b_6=0$, then $W_1$, $W_2$, $W_5$, and $W_6$ are its witnesses; if $b_2b_6\neq0$ and $b_4=0$, then $W_1$, $W_3$, $W_4$, and $W_6$ are its witnesses; if $b_4b_6\neq0$ and $b_2=0$, then $W_2$, $W_3$, $W_4$, and $W_5$ are its witness; if $b_2b_4b_6\neq0$, then $W_1$, $W_2$, and $W_3$ are its witnesses. Hence, all imaginary qutrit states can be detected using only three witnesses. In addition, there are $25$ possible triples of witnesses from the set $\{W_1$, $W_2$,$...$, $W_7\}$, for instance, 
$\{W_1,W_2,W_3\}$, $\{W_1,W_2,W_5\}$, $\{W_1,W_2,W_6\}$, $\{W_1$, $W_2$, $W_7\}$, $\{W_1,W_3,W_4\}$, $\{W_1,W_3,W_6\}$, $\{W_1,W_3,W_7\}$, $\{W_1,W_4,W_5\}$, $\{W_1$, $W_4$, $W_6\}$, $\{W_1,W_4,W_7\}$, $\{W_1,W_5,W_6\}$, $\{W_1,W_5,W_7\}$, $\{W_2,W_3,W_4\}$, $\{W_2$, $W_3$, $W_5\}$, $\{W_2,W_3,W_7\}$, $\{W_2,W_4,W_5\}$, $\{W_2,W_4,W_6\}$, $\{W_2,W_4,W_7\}$, $\{W_2$, $W_5$, $W_6\}$, $\{W_2,W_6,W_7\}$, $\{W_3,W_4,W_5\}$, $\{W_3,W_4,W_6\}$, $\{W_3,W_5,W_6\}$, $\{W_3$, $W_5$, $W_7\}$ or $\{W_3,W_6,W_7\}$,  that suffice to detect the all imaginary states.

By analogy, for $d$-dimensional quantum systems, we can construct witnesses of the form
\be
\begin{aligned}  
	&W_1=a I+a_{12}W_{1,2},\\
	& \qquad\dots\\
	&W_{\frac{d(d-1)}{2}}=a I+a_{d-1d}W_{d-1,d},\\
	&W_{\frac{d(d-1)+2}{2}}=a I+a_{12}W_{1,2}+a_{13}W_{1,3},\\
	& \qquad\dots\\
	&W_{d(d-1)}=a I+a_{d-2d}W_{d-2d}+a_{d-1d}W_{d-1,d},\\
	&W_{d(d-1)+1}=a I+a_{12}W_{1,2}+a_{13}W_{1,3}++a_{14}W_{1,4},\\
	& \qquad\dots\\
	&W_{2^{\frac{d(d-1)}{2}}-2}=a I+a_{13}W_{1,3}+...+a_{d-1d}W_{d-1,d},\\
	&W_{2^{\frac{d(d-1)}{2}}-1}=a I+\sum_{k,l}a_{kl}W_{k,l},
\end{aligned} 
\ee
where $a$, $a_{kl}$ are any real number and $a_{kl}\neq 0$, $k=1,...,d-1$, $k<l\leq d$. We can also check that all the imaginary states can be detected by $\frac{d(d-1)}{2}$ witnesses of these forms. For example, $W_1,...,W_{\frac{d(d-1)}{2}}$ all can detect all imaginary states. In the special case of $a=0$, theses imaginarity witnesses reduce to stringent imaginarity witnesses, which also need $\frac{d(d-1)}{2}$ to detect all imaginary states.


\section{When different witnesses can detect some common imaginary states} \label{sec5}


In this section, we discuss, for any given $m\leqslant M$, when different witnesses in $\bb{W}_{m,M}$ can detect some common imaginary states. We give the following lemma at first.

\begin{lemma}\label{L2}
	If $W_1,W_2\in\bb{W}_{m,M}$ and $W=\lambda W_1+(1-\lambda)W_2\in\bb{W}_{m,M}$, $\lambda\in (0,1)$, then $\bb{I}_W\subseteq\bb{I}_{W_1}\cup\bb{I}_{W_2}$.
\end{lemma}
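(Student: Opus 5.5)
We argue by contraposition: we show that a state detected by neither $W_1$ nor $W_2$ cannot be detected by $W$. The only ingredient is linearity of $\rho\mapsto\tr(W\rho)$ together with the fact that all three witnesses share the same interval $[m,M]$, since $W_1,W_2,W\in\bb{W}_{m,M}$ by hypothesis. The hypothesis $W\in\bb{W}_{m,M}$ matters because the interval attached to $W$ is determined by the extreme eigenvalues of $\rRe(W)=\lambda\rRe(W_1)+(1-\lambda)\rRe(W_2)$. In general these need not equal $m$ and $M$. They only satisfy $\lambda_{\max}(\rRe(W))\leq M$ and $\lambda_{\min}(\rRe(W))\geq m$ by Weyl's inequalities. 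The assumption fixes them to be exactly $m$ and $M$, so that $\bb{I}_W$ is defined relative to the same interval.

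Concretely, take $\rho\in\mS$ with $\rho\notin\bb{I}_{W_1}\cup\bb{I}_{W_2}$. By the definition of $\bb{I}_{W_i}$, this means $\tr(W_1\rho)\in[m,M]$ and $\tr(W_2\rho)\in[m,M]$. Then
\[
\tr(W\rho)=\lambda\,\tr(W_1\rho)+(1-\lambda)\,\tr(W_2\rho),
\]
which is a convex combination of two points of $[m,M]$ because $\lambda\in(0,1)$. Since the interval is convex, $\tr(W\rho)\in[m,M]$, so $\rho\notin\bb{I}_W$.

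Taking complements gives $\bb{I}_W\subseteq\bb{I}_{W_1}\cup\bb{I}_{W_2}$. There is no genuine obstacle here. The only point needing care is the remark above: the membership $W\in\bb{W}_{m,M}$ is exactly what guarantees that $\bb{I}_W$ is defined with respect to $[m,M]$ rather than a possibly smaller interval. Without it, the conclusion could fail, because $W$ could flag states lying outside its own narrower interval but still inside $[m,M]$.
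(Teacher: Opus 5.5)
Your proof is correct and is essentially the paper's argument: take $\rho\notin\bb{I}_{W_1}\cup\bb{I}_{W_2}$, so $\tr(W_1\rho),\tr(W_2\rho)\in[m,M]$, and then $\tr(W\rho)\in[m,M]$ because the interval is convex. Your extra remark is also accurate: the hypothesis $W\in\bb{W}_{m,M}$ is what fixes the interval of $W$ to be exactly $[m,M]$, rather than the possibly narrower one allowed by eigenvalue convexity, and the paper leaves this point implicit.
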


\begin{proof}
	If $\rho\notin\bb{I}_{W_1}\cup\bb{I}_{W_2}$, this implies $\rho\notin\bb{I}_{W_1}$ and $\rho\notin\bb{I}_{W_2}$. By definition of $\bb{I}_W$, $\tr(W_1\rho)\in[m,M]$ and $\tr(W_2\rho)\in[m,M]$, and thus 	
	\beax
	\tr(W\rho)=\lambda\tr(W_1\rho)+(1-\lambda)\tr(W_2\rho)\in[m,M].
	\eeax 
	Hence, $\bb{I}_W\subseteq\bb{I}_{W_1}\cup\bb{I}_{W_2}$.     
\end{proof}

Note that, $W_1,W_2\in\bb{W}_{m,M}$ can not guarantee $W=\lambda W_1+(1-\lambda)W_2\in\bb{W}_{m,M}$ in general, $\lambda\in (0,1)$. For example, $W_1=\sigma_y+\frac{1}{2}\sigma_z\in\bb{W}_{-1/2,1/2}$ and $W_2=-\sigma_y+\frac{1}{2}\sigma_z\in\bb{W}_{-1/2,1/2}$, but  $W=\frac{1}{2}W_1+\frac{1}{2}W_2=\frac{1}{2}\sigma_z\notin\bb{W}_{-1/2,1/2}$ since $W$ is a real matrix.

Unlike that of the coherence witnesses~\cite{Wang B H2021Entropy} and entanglement witnesses~\cite{Wu Y C2006pla,Hou J2010pra}, the quantum states detected by the convex combination of two imaginarity witnesses may not include the common states detected by $W_1$ and $W_2$. We illustrate this fact by the following example.

\begin{example}
	Let 
	\beax
	W_1&=\begin{pmatrix}
		1&\rm i&\rm i\\
		-\rm i&2&0\\
		-\rm i&0&3\\
	\end{pmatrix},~
	W_2&=\begin{pmatrix}
		1&0&-\rm i\\
		0&2&\rm i\\
		\rm i&-\rm i&3\\
	\end{pmatrix},
	\eeax
	then $W=\lambda W_1+(1-\lambda)W_2\in\bb{W}^I_{1,3}$ for all $\lambda\in [0,1]$. We take $\lambda=\frac{1}{2}$ and 
	\beax
	\rho&=\begin{pmatrix}
		{1}/{3}&{\rm i}/{3}&{\rm i}/{3}\\
		-{\rm i}/{3}&{1}/{3}&-{\rm i}/{3}\\
		-{\rm i}/{3}&{\rm i}/{3}&{1}/{3}\\
	\end{pmatrix}.
	\eeax
	It is straightforward that $\tr(W_1\rho)=\frac{10}{3}\notin[1,3]$ and $\tr(W_2\rho)=\frac{2}{3}\notin[1,3]$, so $\rho\in\bb{I}_{W_1}\bigcap\bb{I}_{W_2}$. However, $\tr(W\rho)=2\in[1,3]$. Namely, $\bb{I}_{W_1}\bigcap\bb{I}_{W_2}\nsubseteq\bb{I}_W$. 
\end{example}

\begin{lemma}\label{L3}
	Let $W,W_1,W_2\in\bb{W}_{m,M}$ with $m\neq M$. If $\bb{I}_{W_1}\bigcap\bb{I}_{W_2}=\emptyset$ and $\bb{I}_W\subseteq\bb{I}_{W_1}\cup\bb{I}_{W_2}$, then either $\bb{I}_{W}^{m}\subseteq\bb{I}_{W_1}$ or $\bb{I}_{W}^{m}\subseteq\bb{I}_{W_2}$ and either $\bb{I}_{W}^{M}\subseteq\bb{I}_{W_1}$ or $\bb{I}_{W}^{M}\subseteq\bb{I}_{W_2}$.
\end{lemma}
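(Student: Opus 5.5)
The plan is a connectedness argument. The map $\rho\mapsto\tr(W\rho)$ is affine and continuous on the convex set $\mS$. Hence $\bb{I}_{W}^{m}=\{\rho\in\mS\,|\,\tr(W\rho)<m\}$ is a convex subset of $\mS$, and therefore connected. It may be empty, in which case both inclusions hold trivially. The same holds for $\bb{I}_{W}^{M}$, and by symmetry it suffices to treat $\bb{I}_{W}^{m}$.

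First, by hypothesis $\bb{I}_{W}^{m}\subseteq\bb{I}_W\subseteq\bb{I}_{W_1}\cup\bb{I}_{W_2}$. Next I would observe that each $\bb{I}_{W_j}=\{\rho\,|\,\tr(W_j\rho)<m\}\cup\{\rho\,|\,\tr(W_j\rho)>M\}$ is relatively open in $\mS$. It is the preimage of the open set $\mathbb{R}\setminus[m,M]$ under a continuous map.

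Consequently $\bb{I}_{W}^{m}\cap\bb{I}_{W_1}$ and $\bb{I}_{W}^{m}\cap\bb{I}_{W_2}$ are two relatively open subsets of $\bb{I}_{W}^{m}$. They are disjoint because $\bb{I}_{W_1}\cap\bb{I}_{W_2}=\emptyset$, and their union is all of $\bb{I}_{W}^{m}$. Connectedness of the convex set $\bb{I}_{W}^{m}$ then forces one of them to be empty. That is, $\bb{I}_{W}^{m}\subseteq\bb{I}_{W_2}$ or $\bb{I}_{W}^{m}\subseteq\bb{I}_{W_1}$.

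To avoid topology, I can argue directly along segments. Suppose $\rho_1\in\bb{I}_{W}^{m}\cap\bb{I}_{W_1}$ and $\rho_2\in\bb{I}_{W}^{m}\cap\bb{I}_{W_2}$, and set $\rho_t=(1-t)\rho_1+t\rho_2$. Each $\rho_t$ lies in $\bb{I}_{W}^{m}$, so it lies in exactly one of $\bb{I}_{W_1}$, $\bb{I}_{W_2}$. Let $t_0=\sup\{t\,|\,\rho_t\in\bb{I}_{W_1}\}$. The state $\rho_{t_0}$ belongs to one of the two sets. Openness of that set then contradicts the definition of the supremum: either points slightly beyond $t_0$ are also in $\bb{I}_{W_1}$, or points slightly before $t_0$ are in $\bb{I}_{W_2}$ and hence not in $\bb{I}_{W_1}$.

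The argument for $\bb{I}_{W}^{M}$ is identical. The main point to be careful about is the openness of $\bb{I}_{W_j}$, which relies on the strict inequalities in its definition. Everything else is routine.

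I do not see where $m\neq M$ is essential. It may only serve to make $\bb{I}_W^m$ and $\bb{I}_W^M$ genuinely distinct pieces, since when $m=M$ the union $\bb{I}_W$ is itself disconnected. The argument above does not use it.
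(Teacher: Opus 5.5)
Your argument is correct and is essentially the paper's proof. The paper likewise takes $\rho_1\in\bb{I}_W^m\cap\bb{I}_{W_1}$ and $\rho_2\in\bb{I}_W^m\cap\bb{I}_{W_2}$, uses convexity of $\bb{I}_W^m$ to place the segment $[\rho_1,\rho_2]$ inside $\bb{I}_{W_1}\cup\bb{I}_{W_2}$, and gets a contradiction at a boundary parameter $t_0$ from the strict inequalities defining these sets; working with the two disjoint relatively open sets $\bb{I}_{W_1},\bb{I}_{W_2}$ instead of the paper's four pieces $\varOmega_1,\dots,\varOmega_4$ is just a cleaner packaging, and you are right that $m\neq M$ is never used (for $m=M$ the disjointness hypothesis cannot hold anyway, by Theorem~\ref{th2}(2)).
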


\begin{proof}
	Suppose both $\bb{I}_{W_1} \bigcap\bb{I}_{W}^{m}$ and $\bb{I}_{W_2}\bigcap \bb{I}_{W}^{m}$ are nonempty. We take $\rho_i=\bb{I}_{W_i }\bigcap\bb{I}_{W}^{m}$ with $i=1,2$. Considering segment
	\bea
	[\rho_1,\rho_2]:=\{\rho_t=(1-t)\rho_1+t\rho_2 |\ 0\leq t\leq1\}.
	\eea 
	Since $\bb{I}_{W}^{m}$ is a convex set, we have 	
	\beax [\rho_1,\rho_2]\subseteq\bb{I}_{W}^{m}\subseteq\bb{I}_{W_1}\cup\bb{I}_{W_2}=\left(\bb{I}_{W_1}^{m}\bigcup\bb{I}_{W_1}^{M}\right)\bigcup\left( \bb{I}_{W_2}^{m}\bigcup\bb{I}_{W_2}^{M}\right).
	\eeax
	For simplicity, we define 
	\beax 
	\varOmega_1:=[\rho_1,\rho_2]\bigcap\bb{I}_{W_1}^{m},~
	\varOmega_2:=[\rho_1,\rho_2]\bigcap\bb{I}_{W_1}^{M},\\
	\varOmega_3:=[\rho_1,\rho_2]\bigcap\bb{I}_{W_2}^{m},~
	\varOmega_4:=[\rho_1,\rho_2]\bigcap\bb{I}_{W_2}^{M}.
	\eeax
	Thus we get
	\be 
	[\rho_1,\rho_2]\\=\varOmega_1\cup\varOmega_2\cup\varOmega_3\cup\varOmega_4,
	\ee 
	that is, $[\rho_1,\rho_2]$ is divided into four convex parts. Since $\bb{I}_{W_1} \bigcap\bb{I}_{W}^{m}$ and $\bb{I}_{W_2}\bigcap \bb{I}_{W}^{m}$ are nonempty by assumption, at least one $\Omega_i$ in both $\{\Omega_1, \Omega_2\}$ and  $\{\Omega_3, \Omega_4\}$ is nonempty. We assume with no loss of generality that $\Omega_2$ and $\Omega_3$ are nonempty. So there exists $t_0\in(0,1)$ such that $\{\rho_t |\ 0\leq t<t_0\}\subseteq\bb{I}_{W_1}^{M}$,
	$\{\rho_t|\ t_0<t\leq1\}\subseteq\bb{I}_{W_2}^{m}$ and either $\rho_{t_0}\in\bb{I}_{W_1}^{M}$ or $\rho_{t_0}\in\bb{I}_{W_2}^{m}$. 
	If $\rho_{t_0}\in\bb{I}_{W_1}^{M}$, then $\tr(W_1\rho_{t_0})>M$. This leads to, for sufficiently small $\epsilon>0$ with $t_0+\epsilon\leq1$, we have
	\beax
	\tr(W_1\rho_{t_0+\epsilon})\in [m,M] 
	\eeax 
	and 
	\beax
	\tr(W_1\rho_{t_0+\epsilon})&=&\tr(W_1\rho_{t_0})+\epsilon\left[\tr(W_1\rho_2)-\tr(W_1\rho_1)\right]\\
	&>&M
	\eeax 
	simultaneously, a contradiction. Similarly, $\rho_{t_0}\in\bb{I}_{W_2}^{m}$ leads to a contradiction as well. If three or all of $\{\Omega_1, \Omega_2, \Omega_3, \Omega_4\}$ are not empty sets, by similar arguments, it also leads to a contradiction as above. Therefore, either $\bb{I}_{W}^{m}\subseteq\bb{I}_{W_1}$ or $\bb{I}_{W}^{m}\subseteq\bb{I}_{W_2}$. A similar argument applies to $\bb{I}_{W}^{M}$. This completes the proof. 
\end{proof}

\begin{theorem}\label{th2}
	Let $\{W_i\}_{i=1}^n$ be a finite set in $\bb{W}_{m,M}$. Then the following statements hold.
	\begin{enumerate}
		\item [\rm(1)]For $m\neq M$ and any $t_i\geq0$ with $\sum_it_i=1$, if  $W=\sum_it_iW_i\in\bb{W}_{m,M}$, then $\bigcap_{i=1}^n\bb{I}_{W_i}\not=\emptyset$.    
		\item [\rm(2)] If $m=M=a$, then $\bigcap_{i=1}^n\bb{I}_{W_i}\neq\emptyset$ and it is an infinite set.
	\end{enumerate}	
\end{theorem}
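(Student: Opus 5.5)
For part (2) I will argue directly. If $m=M=a$, then every eigenvalue of $\rRe(W_i)$ equals $a$, so $\rRe(W_i)=aI$ and $W_i=aI+\rIm(W_i)\,{\rm i}$ with $\rIm(W_i)\neq0$ a real antisymmetric matrix; the latter holds because otherwise $\tr(W_i\rho)=a$ for every state and $W_i\notin\bb{W}_{a,a}$.

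For any state $\rho$ we then get $\tr(W_i\rho)=a+f_i(\rho)$, where $f_i(\rho)=\tr\big({\rm i}\,\rIm(W_i)\rho\big)$ is a real linear functional that depends only on $\rIm(\rho)$ and is not identically zero. For $i=1,\dots,n$ the zero sets $\{X:f_i(X)=0\}$ are proper hyperplanes in the real space of Hermitian matrices of the form ${\rm i}A$ with $A$ real antisymmetric. A finite union of proper hyperplanes cannot cover that space, so we can pick such an $X$ with $f_i(X)\neq0$ for all $i$. Then $\rho_\epsilon=\frac{I}{d}+\epsilon X$ is a state for all small $\epsilon>0$, since $I/d$ is an interior point. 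It satisfies $\tr(W_i\rho_\epsilon)=a+\epsilon f_i(X)\neq a$ for every $i$. This gives infinitely many states in $\bigcap_i\bb{I}_{W_i}$. The explicit witnesses $W_1,\dots,W_{d(d-1)/2}$ constructed above serve as a sanity check.

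For part (1) I will argue by contradiction and induct on $n$. First I treat $n=2$. Suppose $\bb{I}_{W_1}\cap\bb{I}_{W_2}=\emptyset$. Lemma~\ref{L2}, applied to $W=t_1W_1+t_2W_2\in\bb{W}_{m,M}$, gives $\bb{I}_W\subseteq\bb{I}_{W_1}\cup\bb{I}_{W_2}$. Since $m\neq M$, Lemma~\ref{L3} then places $\bb{I}_W^m$ inside a single $\bb{I}_{W_j}$, and similarly for $\bb{I}_W^M$. I then aim to contradict this rigidity using the structure $\tr(W\rho)=\tr(\rRe(W)\rRe(\rho))+\tr(\rIm(W)\rIm(\rho))$. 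Replacing $\rho$ by $\rho^T$ flips the sign of the imaginary contribution, and taking the segment between $\rho$ and $\rho^T$ passes through the real state $\rRe(\rho)$. Along this segment I will track $\tr(W_1\cdot)$ and $\tr(W_2\cdot)$, which are affine, together with the weighting $t_1\tr(W_1\cdot)+t_2\tr(W_2\cdot)=\tr(W\cdot)$. The goal is to produce a point of $\bb{I}_W$ that lies in neither $\bb{I}_{W_j}$, or one that lies in both. For general $n$ I group $W=t_1W_1+(1-t_1)W'$ with $W'=\sum_{i\ge2}\frac{t_i}{1-t_1}W_i$ and apply the same lemmas inductively. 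Where the intermediate combination $W'$ fails to lie in $\bb{W}_{m,M}$, I handle it by a perturbation along ${\rm i}\,\rIm$-directions, which leaves the real part unchanged.

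The main obstacle is clearly part (1), specifically the step that turns the conclusion of Lemma~\ref{L3} into a contradiction. Knowing only that each one-sided detection set of $W$ sits inside one $\bb{I}_{W_j}$ does not obviously force a common detected state. In addition, the eigenvalue extremes of $\rRe(W)=\sum_it_i\rRe(W_i)$ need not coincide with those of the individual $\rRe(W_i)$. My first attempt will be to use the states $\rho^T$ together with the mixture $\frac{1}{2}(\rho+\rho^T)\in\mR$, where all the functionals lie in $[m,M]$, and to exploit the affine behaviour of $\tr(W_i\cdot)$ on such segments. If that fails, I will look for a counterexample-driven refinement of the hypotheses.
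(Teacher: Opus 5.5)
Your part (2) is correct, and it takes a different and cleaner route than the paper. The paper inducts on $n$. It mixes a state detected by $W_1,\dots,W_{n-1}$ with one detected by $W_2,\dots,W_n$, discards the finitely many mixing parameters at which some $\tr[(W_i-aI)\eta_\lambda]$ vanishes, and then proves infiniteness by a separate contradiction. You pick a traceless, purely imaginary Hermitian $X$ off the finitely many hyperplanes $\tr({\rm i}\,\rIm(W_i)X)=0$ and perturb $I/d$ along it. That gives nonemptiness and infiniteness in one step.

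Part (1) has a genuine gap. You describe a strategy but never supply the step that turns the conclusion of Lemma~\ref{L3} into a contradiction. That step cannot be supplied: statement (1) is false, even under the strong reading the paper's proof uses, namely that every convex combination lies in $\bb{W}_{m,M}$. Take $d=4$, $m=0$, $M=1$, and define, block-diagonally with respect to ${\rm span}\{|0\rangle,|1\rangle\}\oplus{\rm span}\{|2\rangle,|3\rangle\}$,
\[
W_1=\big(I_2+\tfrac12\sigma_y\big)\oplus\tfrac14\sigma_y,\qquad W_2=\big(I_2-\tfrac14\sigma_y\big)\oplus\big(-\tfrac12\sigma_y\big).
\]
Consider $W_\lambda=\lambda W_1+(1-\lambda)W_2=\big(I_2+\frac{3\lambda-1}{4}\sigma_y\big)\oplus\frac{3\lambda-2}{4}\sigma_y$.
\begin{itemize}
\item It has $\rRe(W_\lambda)={\rm diag}(1,1,0,0)$.
\item It has an eigenvalue above $1$ unless $\lambda=\frac13$, and an eigenvalue below $0$ unless $\lambda=\frac23$.
\end{itemize}
Hence $W_\lambda\in\bb{W}_{0,1}$ for every $\lambda\in[0,1]$.

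Now let $\rho$ be any state. Let $p$ be its weight on the first block, and $a,b$ the unnormalized expectations of $\sigma_y$ in the two diagonal blocks. Then $|a|\le p$, $|b|\le 1-p$, $\tr(W_1\rho)=p+\frac a2+\frac b4$ and $\tr(W_2\rho)=p-\frac a4-\frac b2$. Three facts follow:
\begin{itemize}
\item $\frac13\tr(W_1\rho)+\frac23\tr(W_2\rho)=p-\frac b4\le1$, so both values cannot exceed $1$.
\item $\frac23\tr(W_1\rho)+\frac13\tr(W_2\rho)=p+\frac a4\ge0$, so both values cannot be negative.
\item $|\tr(W_1\rho)-\tr(W_2\rho)|=\frac34|a+b|\le\frac34$, so one cannot exceed $1$ while the other is negative.
\end{itemize}
Hence $\bb{I}_{W_1}\cap\bb{I}_{W_2}=\emptyset$. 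Your own diagnosis was right: knowing where $\bb{I}^m_W$ and $\bb{I}^M_W$ sit does not force a common detected state. No argument through $\rho$, $\rho^T$ and $\rRe(\rho)$ can close this, and your fallback of looking for a counterexample and refining the hypotheses is the correct move.

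The same example shows where the paper's $n=2$ argument breaks. It implicitly needs $\bb{I}^m_{W_{\lambda_0}}\neq\emptyset$; otherwise its ``for all $\rho\in\bb{I}^m_{W_{\lambda_0}}$'' contradiction is vacuous. It also treats emptiness of one side as if $W_{\lambda_0}$ stopped being a witness. Here the lower set behaves as follows:
\begin{itemize}
\item $\bb{I}^m_{W_\lambda}\subseteq\bb{I}_{W_2}$ for $\lambda<\frac23$, and $\bb{I}^m_{W_\lambda}\subseteq\bb{I}_{W_1}$ for $\lambda>\frac23$.
\item At $\lambda=\frac23$ it is empty, but $\bb{I}^M_{W_\lambda}\neq\emptyset$ there.
\end{itemize}
The upper set switches at the different parameter $\lambda=\frac13$. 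So every $W_\lambda$ remains a witness, and the paper's proof of (1) is incorrect along with the statement.
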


\begin{proof}{\rm(1)} We consider the case of $n=2$ at first. We suppose $\bb{I}_{W_1}\bigcap\bb{I}_{W_2}=\emptyset$, $W_\lambda=\lambda W_1+(1-\lambda)W_2\in\bb{W}_{m,M}$, $\lambda\in[0,1]$. By Lemma~\ref{L2} and~\ref{L3},
	\beax
	\bb{I}_{W_\lambda}^{m}\subseteq\bb{I}_{W_1}\ \text{or}~\ \bb{I}_{W_\lambda}^{m}\subseteq\bb{I}_{W_2};~ \bb{I}_{W_\lambda}^{M}\subseteq\bb{I}_{W_1}\ \text{or}~\ \bb{I}_{W_\lambda}^{M}\subseteq\bb{I}_{W_2}.
	\eeax 
	When $\lambda$ varies from $0$ to $1$ continuously, $\bb{I}_{W_\lambda}^{m}$ also varies from $\bb{I}_{W_2}^{m}$ to $\bb{I}_{W_1}^{m}$ continuously. We take $\lambda_0:= \sup\{\lambda|\ \bb{I}_{W_\lambda}^{m}\subseteq \bb{I}_{W_2}\}$.
	
	If $\bb{I}_{W_{\lambda_0}}^{m}\subseteq\bb{I}_{W_2}$, then there exist $0<\epsilon<1-\lambda_0$ such that ${W_{\lambda_0+\epsilon}}$ is not an imaginarity witness. Otherwise, if for all $0<\epsilon<1-\lambda_0$, $\bb{I}_{W_{\lambda_0+\epsilon}}^{m}\neq\emptyset$ or $\bb{I}_{W_{\lambda_0+\epsilon}}^{M}\neq\emptyset$, then we have
	$\bb{I}_{W_{\lambda_0}}^{m}\subseteq\bb{I}_{W_2}$ and $\bb{I}_{W_{\lambda_0+\epsilon}}^{m}\subseteq\bb{I}_{W_1}$, For all $\rho\in \bb{I}_{W_{\lambda_0}}^{m}$, we have 
	\beax
	\tr[W_{\lambda_0}\rho]<m \ , \
	\tr[W_{\lambda_0}\rho]+\epsilon[\tr(W_1\rho)-\tr(W_2\rho)]\geq m,
	\eeax
	a contradiction.
	
	If $\bb{I}_{W_{\lambda_0}}^{m}\subseteq\bb{I}_{W_1}$,  
	then there exist $0<\epsilon<\lambda_0$ such that ${W_{\lambda_0-\epsilon}}$ is not an imaginarity witness. Or else,
	if for all $0<\epsilon<\lambda_0$, $\bb{I}_{W_{\lambda_0-\epsilon}}^{m}\neq\emptyset$ or $\bb{I}_{W_{\lambda_0-\epsilon}}^{M}\neq\emptyset$, then we have
	$\bb{I}_{W_{\lambda_0}}^{m}\subseteq\bb{I}_{W_1}$, $\bb{I}_{W_{\lambda_0-\epsilon}}^{m}\subseteq\bb{I}_{W_2}$,
	and for all $\rho\in \bb{I}_{W_{\lambda_0}}^{m}$, we have
	\beax
	\tr[W_{\lambda_0}\rho]<m \ , \
	\tr[W_{\lambda_0}\rho]+\epsilon[\tr(W_2\rho)-\tr(W_1\rho)]\geq m.
	\eeax
	This also leads a contradiction. Similarly, $\bb{I}_{W_\lambda}^{M}\subseteq\bb{I}_{W_1}\ \text{or}\ \bb{I}_{W_\lambda}^{M}\subseteq\bb{I}_{W_2}$ leads to a contradiction as well.

We assume that the conclusion holds for $k\leq n-1$. By induction, we need to show that the theorem holds for $n$. We show it for $n=3$ and the general case can be followed similarly.

By the case of $n=2$, 
\beax
\bb{I}_{W_1}\bigcap\bb{I}_{W_2}\neq\emptyset~\text{and}~\bb{I}_{W_1}\bigcap\bb{I}_{W_3}\neq\emptyset.
\eeax 
If 
\beax
\bb{I}_{W_1}\bigcap\bb{I}_{W_2}\bigcap\bb{I}_{W_3}=\emptyset,
\eeax   
then
\be\label{3}
\left(\bb{I}_{W_1}\bigcap\bb{I}_{W_2}\right)\bigcap\left(\bb{I}_{W_1}\bigcap\bb{I}_{W_3}\right)=\emptyset.
\ee
Let $W_\mu:=\mu W_2 +(1-\mu)W_3, \mu\in[0,1]$,
then, by Lemma~\ref{L2},
\beax
\left(\bb{I}_{W_1}^{m}\cap\bb{I}_{W_\mu}^{m}\right)\cup\left(\bb{I}_{W_1}^{m}\cap\bb{I}_{W_\mu}^{M}\right)\subseteq\left(\bb{I}_{W_1}^{m}\cap\bb{I}_{W_2}\right)\cup\left(\bb{I}_{W_1}^{m}\cap\bb{I}_{W_3}\right),\\
\left(\bb{I}_{W_1}^{M}\cap\bb{I}_{W_\mu}^{m}\right)\cup\left(\bb{I}_{W_1}^{M}\cap\bb{I}_{W_\mu}^{M}\right)\subseteq\left(\bb{I}_{W_1}^{M}\cap\bb{I}_{W_2}\right)\cup\left(\bb{I}_{W_1}^{M}\cap\bb{I}_{W_3}\right).
\eeax 
According to Eq.~\eqref{3}, we know $\bb{I}_{W_1}^{m}\bigcap\bb{I}_{W_2}$, $\bb{I}_{W_1}^{m}\bigcap\bb{I}_{W_3}$, $\bb{I}_{W_1}^{M}\bigcap\bb{I}_{W_2}$, and $\bb{I}_{W_1}^{M}\bigcap\bb{I}_{W_3}$ are disjoint. Since $\bb{I}_{W_1}^{m}\cap\bb{I}_{W_\mu}^{m}$, $\bb{I}_{W_1}^{m}\cap\bb{I}_{W_\mu}^{M}$, $\bb{I}_{W_1}^{M}\cap\bb{I}_{W_\mu}^{m}$, and $\bb{I}_{W_1}^{M}\cap\bb{I}_{W_\mu}^{M}$ are convex, we know that $\bb{I}_{W_1}^{m}\cap\bb{I}_{W_\mu}^{m}$ varies from $\bb{I}_{W_1}^{m}\bigcap\bb{I}_{W_3}^{m}$ to $\bb{I}_{W_1}^{m}\bigcap\bb{I}_{W_2}^{m}$ whenever $\mu$ varies from $0$ to $1$ continuously. We take $\mu_0:= \sup\{\mu~ |~\bb{I}_{W_1}^{m}\bigcap\bb{I}_{W_\mu}^{m}\subseteq\bb{I}_{W_1}^{m}\cap\bb{I}_{W_3}\}$. 
Using the same argument as for the case $n=2$ above, we
can conclude that ${W_{\mu_0}}$ is not an imaginarity witness, which also leads to,
\begin{align*}
	W &=\theta W_1+(1-\theta)W_{\mu_0}\\
	&=\theta W_1+(1-\theta)\mu_0 W_2+(1-\theta)(1-\mu_0)W_3
\end{align*}
is not an imaginarity witness for some $\theta\in[0,1]$, a contradiction. Similarly, when we consider $\bb{I}_{W_1}^{m}\cap\bb{I}_{W_\mu}^{M}$, $\bb{I}_{W_1}^{M}\cap\bb{I}_{W_\mu}^{m}$, and $\bb{I}_{W_1}^{M}\cap\bb{I}_{W_\mu}^{M}$, they also lead to contradictions as above. 

{\rm(2)} As both $\bb{I}_{W_1}$ and $\bb{I}_{W_2}$ are nonempty, we assume $\rho\in\bb{I}_{W_1}$, $\sigma\in\bb{I}_{W_2}$ and $\rho,\sigma\notin\bb{I}_{W_1}\bigcap\bb{I}_{W_2}$. That is, $\rho\notin\bb{I}_{W_2}$ and $\sigma\notin\bb{I}_{W_1}$, namely, we have $\tr(W_2\rho)=\tr(W_1\sigma)=a$. Then for any $\lambda\in(0,1)$, we consider $\varrho:=\lambda\rho+(1-\lambda)\sigma$. Clearly, 
\beax
\tr(W_1\varrho)&=\lambda\tr(W_1\rho)+(1-\lambda)a\neq a, \\  
\tr(W_2\varrho)&=\lambda a+(1-\lambda)\tr(W_2\sigma)\neq a.
\eeax 
So, for each $\lambda\in(0,1)$, $\varrho$ belongs to $\bb{I}_{W_1}\bigcap\bb{I}_{W_2}$. Now we suppose that the statement holds when the cardinality of the set of imaginarity witnesses is equal to $m-1$. Let $\widetilde{W}=\{W_1,W_2,...,W_m\}$, $\widetilde{W}_1=\{W_1,...,W_{m-1}\}$, $\widetilde{W}_2=\{W_2,...,W_m\}$. As the two sets $$\bb{I}_{\widetilde{W}_1}=\bigcap_{i=1}^{m-1}\bb{I}_{W_i},\ \bb{I}_{\widetilde{W}_2}=\bigcap_{i=2}^{m}\bb{I}_{W_i}$$ are nonempty, we assume $\rho\in\bb{I}_{\widetilde{W}_1}$ and $\sigma\in\bb{I}_{\widetilde{W}_2}$. If $\tr(W_m\rho)\neq a$, we have $\rho\in\bb{I}_{\widetilde{W}}=\bigcap_{i=1}^{m}\bb{I}_{W_i}$, i.e., $\bb{I}_{\widetilde{W}}\neq\emptyset$. Therefore, we assume $\tr(W_m\rho)=a$ and $\tr(W_1\sigma)=a$. In the same way, for each $\lambda\in(0,1)$, we define $\eta_\lambda:=\lambda\rho+(1-\lambda)\sigma$, then
\beax
\tr(W_1\eta_\lambda)&=&\lambda\tr(W_1\rho)+(1-\lambda)a\neq a,\\ \tr(W_m\eta_\lambda)&=&\lambda a+(1-\lambda)\tr(W_m\sigma)\neq a.
\eeax
In addition, for each $2\leqslant i \leqslant m-1$, we have $a_i:=\tr(W_i\rho)\neq a$ and $b_i :=\tr(W_i\sigma)\neq a$, i.e, $a_i:=\tr[(W_i-aI)\rho]\neq0$ and $b_i :=\tr[(W_i-aI)\sigma]\neq0$. Define 	
\bea
f_i(\lambda)&:=&\tr[(W_i-aI)\eta_\lambda]=(a_i-b_i)\lambda+b_i.
\eea
If $a_i=b_i$, then we have $\tr(W_i\eta_\lambda)=b_i\neq a$. Then $\eta_\lambda\in\bb{I}_{\widetilde{W}}$. If $a_i\neq b_i$, for each $\lambda\in(0,1)\setminus\{\frac{b_i}{a_i-b_i}\}_{i=2}^{m-1}$,  $\tr[(W_i-aI)\eta_\lambda]=f(\lambda_i)\neq0$,
$2\leqslant i\leqslant m-1$, which also holds for $i=1,m$. Therefore, all these $\eta_\lambda$'s belong to the set $\bb{I}_{\widetilde{W}}$. 

Note that, for any imaginary state $\rho$, there exists a witness $W_{\bar{\rho}}\in \bb{W}_{m,M}$ that can not detect the imaginarity of $\rho$, i.e., $\rho\notin\bb{I}_{W_{\bar{\rho}}}$. Suppose $\bigcap_{i=1}^n\bb{I}_{W_i}$ is finite, then $\bigcap_{i=1}^n\bb{I}_{W_i}=\{\rho_j\}_{j=1}^m$ for some $m\in\mathbb{N}$. This leads to    
\beax
\left(\bigcap_{i=1}^n\bb{I}_{W_i}\right)\bigcap\left(\bigcap_{j=1}^m\bb{I}_{W_{\overline{\rho_j}}}\right)=\emptyset,    
\eeax 
a contradiction. This completes the proof. 
\end{proof}

\begin{example}
Let 
\beax
W_1=\begin{pmatrix}
	1&\rm i&0\\
	-\rm i&0&0\\
	0&0&1\\
\end{pmatrix},~
W_2=\begin{pmatrix}
	1&0&\rm i\\
	0&0&0\\
	-\rm i&0&1\\
\end{pmatrix},
\eeax
\beax
W_3=\begin{pmatrix}
	1&0&0\\
	0&0&\rm i\\
	0&-\rm i&1\\
\end{pmatrix}.
\eeax
Then $W_i\in \bb{W}_{0,1}$, and for any $t_i\geq0$ with $\sum_i^3t_i=1$, $W=t_1W_1+t_2W_2+t_3W_3\in\bb{W}_{0,1}$. Taking
\beax
\rho=\begin{pmatrix}
	{1}/{3}&{\rm i}/{3}&{\rm i}/{3}\\
	-{\rm i}/{3}&{1}/{3}&{\rm i}/{3}\\
	-{\rm i}/{3}&-{\rm i}/{3}&{1}/{3}\\
\end{pmatrix},
\eeax
we get $\tr(W_1\rho)=\tr(W_2\rho)=\tr(W_3\rho)=\frac{4}{3}\notin[0,1]$, i.e, $\rho\in\bb{I}_{W_1}\bigcap\bb{I}_{W_2}\bigcap\bb{I}_{W_3}$.
\end{example}

\begin{example}
Let 
\beax
W_1=\begin{pmatrix}
	1&\rm i&\rm i\\
	-\rm i&1&0\\
	-\rm i&0&1\\
\end{pmatrix},~
W_2=\begin{pmatrix}
	1&0&-\rm i\\
	0&1&\rm i\\
	\rm i&-\rm i&1\\
\end{pmatrix},
\eeax
\beax
\rho=\begin{pmatrix}
	{1}/{3}&{\rm i}/{3}&{\rm i}/{3}\\
	-{\rm i}/{3}&{1}/{2}&0\\
	-{\rm i}/{3}&0&{1}/{6}\\
\end{pmatrix}.
\eeax
Then $W_{1,2}\in\bb{W}_{1,1}$. It is straightforward that $\tr(W_1\rho)=\frac{7}{3}\neq1$ and $\tr(W_2\rho)=\frac{1}{3}\neq1$, i.e., $\rho\in\bb{I}_{W_1}\bigcap\bb{I}_{W_2}$.
\end{example}

It is worth mentioning that the converse of Theorem~\ref{th2} $(1)$ is not ture, which is different from that of coherence witnesses~\cite{Wang B H2021Entropy} and entanglement witnesses~\cite{Wu Y C2006pla,Hou J2010pra}. Namely, although two witnesses $W_1$ and $W_2$ can detect common imaginary states, their convex combination may not be an imaginarity witness. We present the following illustrative example.

\begin{example}
We consider 
\beax
W_1=\begin{pmatrix}
	1&\rm i&0\\
	-\rm i&0&0\\
	0&0&1\\
\end{pmatrix},~
W_2=\begin{pmatrix}
	0&-\rm i&0\\
	\rm i&1&0\\
	0&0&0\\
\end{pmatrix}
\eeax
in $\bb{W}_{0,1}$ and 
\beax
\rho=\begin{pmatrix}
	{1}/{3}&{\rm i}/{3}&0\\
	-{\rm i}/{3}&{1}/{2}&0\\
	0&0&{1}/{6}\\
\end{pmatrix}.
\eeax
We have $\tr(W_1\rho)=\frac{7}{6}\notin[0,1]$ and $\tr(W_2\rho)=-\frac{1}{6}\notin[0,1]$, but $W=\lambda_0 W_1+(1-\lambda_0)W_2\notin\bb{W}_{0,1}$ whenever $\lambda_0=\frac{1}{2}$. 
\end{example}


\section{When two witnesses can detect the same imaginary states} \label{sec6}


The conditions under which two witnesses can simultaneously detect identical resource states for both entanglement and coherence have been explored in Ref.~\cite{Wu Y C2006pla,Hou J2010pra,Lewenstein2000pra,Li M S2024pra}. With the same spirit in mind, we consider in this section the counterpart for imaginarity. The main result is the following.

\begin{theorem}\label{th3}
Let $W_1$ and $W_2$ be two imaginarity witness in $\bb{W}_{m,M}$. Then the following statements are true.
\begin{enumerate}
	\item[(1)] If $m\neq M$, then $\bb{I}_{W_1}=\bb{I}_{W_2}$ if and only if $W_1=W_2$ or $W_1+W_2=(m+M)I$.
	\item[(2)] If $m=M=a$, then $\bb{I}_{W_1}=\bb{I}_{W_2}$ if and only if $W_1-aI=r(W_2-aI)$. Moreover, if there no exists some $r\neq0$, such that $W_1-aI=r(W_2-aI)$, then $\bb{I}_{W_1}\nsubseteq\bb{I}_{W_2}$ and $\bb{I}_{W_2}\nsubseteq\bb{I}_{W_1}$.
\end{enumerate}
\end{theorem}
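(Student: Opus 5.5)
The approach is to work with the affine functionals $f_i(\rho):=\tr(W_i\rho)$ on $\mS$ and with the complementary sets $C_i:=\mS\setminus\bb{I}_{W_i}=\{\rho\,|\,f_i(\rho)\in[m,M]\}$. Then $\bb{I}_{W_1}=\bb{I}_{W_2}$ is the same as $C_1=C_2$. We will use one basic fact throughout. A Hermitian $W$ is uniquely determined by the values $\tr(W\rho)$ on any relatively open subset of $\mS$, because such a subset spans the Hermitian matrices affinely together with the trace condition. Consequently, an affine hyperplane of the trace-one Hermitian matrices is determined by any relatively open piece of it lying in the interior of $\mS$.

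\emph{Sufficiency in (1) and (2).} If $W_1=W_2$ there is nothing to prove. Suppose $W_1+W_2=(m+M)I$. Then $f_2=m+M-f_1$, and $f_1(\rho)\in[m,M]$ if and only if $m+M-f_1(\rho)\in[m,M]$. Moreover $\rRe(W_2)=(m+M)I-\rRe(W_1)$ again has extreme eigenvalues $m$ and $M$, so $W_2\in\bb{W}_{m,M}$ and the sets coincide. For (2), if $W_1-aI=r(W_2-aI)$ with $r\neq0$, then $f_1-a=r(f_2-a)$, so the two zero sets, and hence their complements $\bb{I}_{W_i}$, agree.

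\emph{Necessity in (1).} Assume $C_1=C_2$ with $m\neq M$. I would first show that at least one of the level sets $H_i^c:=\{f_i=c\}$, for $c\in\{m,M\}$, meets the interior of $\mS$ (the full-rank states) in a relatively open piece. Here $f_1(I/d)=\tr(\rRe W_1)/d\in[m,M]$, and $f_1$ leaves $[m,M]$ somewhere; by density of full-rank states it does so at an interior point. Following the segment from $I/d$ to such a point, $f_1$ crosses $m$ or $M$ inside the interior, and the crossing set is an open piece of a hyperplane. Such points are exactly the boundary points of $C_1$ relative to $\mathrm{int}\,\mS$. Since $C_1=C_2$, these relative boundaries coincide, so an open piece of $H_1^c$ equals an open piece of $H_2^{c'}$ for some $c,c'\in\{m,M\}$. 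By the basic fact above this gives
\begin{equation*}
W_2-c'I=\alpha\,(W_1-cI)
\end{equation*}
for some real $\alpha\neq0$. Taking real parts, $\rRe(W_2)=\alpha\,\rRe(W_1)+(c'-\alpha c)I$, and the extreme eigenvalues $\{m,M\}$ must map onto $\{m,M\}$. Since $m\neq M$, this forces either $\alpha=1$ with zero shift, giving $W_1=W_2$, or $\alpha=-1$ with shift $m+M$, giving $W_1+W_2=(m+M)I$.

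\emph{Necessity and the ``moreover'' in (2).} When $m=M=a$, $\rRe(W_i)$ has all eigenvalues equal to $a$, so $\rRe(W_i)=aI$. Hence $V_i:=W_i-aI$ satisfies $\rRe(V_i)=0$ and $\tr V_i=0$, and $\bb{I}_{W_i}=\{\rho\,|\,\tr(V_i\rho)\neq0\}$. The linear functionals $g_i(X)=\tr(V_iX)$ on Hermitian matrices both vanish at $I$. Their zero sets in $\mS$ pass through the interior point $I/d$, so each zero set is an open-dense piece of a hyperplane through $I/d$. If the zero sets coincide, the kernels of $g_1$ and $g_2$ coincide, since each kernel is spanned by $I$ together with differences of nearby states. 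Equal kernels give $V_1=rV_2$ with $r\neq0$. Conversely, if no such $r$ exists, the two kernels are distinct hyperplanes through $I/d$. We can then pick a full-rank $\rho$ near $I/d$ with $g_2(\rho)=0\neq g_1(\rho)$, which shows $\bb{I}_{W_1}\nsubseteq\bb{I}_{W_2}$, and symmetrically $\bb{I}_{W_2}\nsubseteq\bb{I}_{W_1}$.

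The main obstacle I expect is the interior-boundary step in (1). One has to argue carefully that a crossing of level $m$ or $M$ really occurs in the interior of $\mS$ rather than only on its boundary. One also has to show that the relative boundary of $C_1$ inside $\mathrm{int}\,\mS$ determines both the hyperplane and its level, even when one of $\bb{I}_{W_i}^m$ and $\bb{I}_{W_i}^M$ is empty.
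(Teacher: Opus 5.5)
Your proof is correct. For part (1) it takes a different route from the paper.

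\textbf{The paper's route.} It splits each $\bb{I}_{W_i}$ into the disjoint convex pieces $\bb{I}_{W_i}^{m}$ and $\bb{I}_{W_i}^{M}$. It then runs a case analysis: which piece is empty, and whether $\bb{I}_{W_1}^{m}$ lands in $\bb{I}_{W_2}^{m}$ or in $\bb{I}_{W_2}^{M}$. In each case it proves equality of one specific pair of level sets, e.g.\ $\mD_{W_1}^M=\mD_{W_2}^M$. It then cites Lemma~2 of Li \emph{et al.} to get proportionality, and uses $W_{1,2}\in\bb{W}_{m,M}$ to fix the scalar to $\pm1$ case by case.

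\textbf{Your route.} You identify the boundary of the complement $C_1$ inside the full-rank states as $(H_1^m\cup H_1^M)\cap\mathrm{int}\,\mS$, and show it is nonempty via the segment from $I/d$. This yields a single relation $W_2-c'I=\alpha(W_1-cI)$ with unknown levels $c,c'$. Spectral matching in $\rRe(W_2)=\alpha\,\rRe(W_1)+(c'-\alpha c)I$ then settles every case at once: $\alpha=1$ with $c=c'$, or $\alpha=-1$ with $c+c'=m+M$.

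\textbf{What each buys.} Your version is uniform and self-contained. It also makes explicit something the paper hands off to the external lemma: a level set determines the operator only because it meets the interior of $\mS$. The paper's bookkeeping records which piece corresponds to which, e.g.\ $\bb{I}_{W_1}^{m}=\bb{I}_{W_2}^{M}$ in the reflected case. That same case structure is reused for the finer relation in Lemma~\ref{L5} and Theorem~\ref{th4}.

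\textbf{One step to state.} You should say why a relatively open piece of $H_1^c\cap\mathrm{int}\,\mS$ lies in a single $H_2^{c'}$. When $m\neq M$, $H_2^m$ and $H_2^M$ are disjoint parallel hyperplanes, so a small enough neighbourhood of a point of $H_1^c$ meets only one of them.

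\textbf{Part (2).} This is essentially the paper's argument: zero sets through $I/d$ determine the kernel. Your explicit separating state near $I/d$ is the contrapositive of the paper's dimension count from $\mM^0_{W_2'}\subseteq\mM^0_{W_1'}$.
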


\begin{proof}
{\rm(1)} Since 
\bex
\bb{I}_{W_1}=\bb{I}_{W_1}^{m}\bigcup\bb{I}_{W_1}^{M},~
\bb{I}_{W_2}=\bb{I}_{W_2}^{m}\bigcup\bb{I}_{W_2}^{M},
\eex
and $\bb{I}_{W_1}^{m}$ and $\bb{I}_{W_1}^{M}$ are two disjoint convex sets, so are $\bb{I}_{W_2}^{m}$ and $\bb{I}_{W_2}^{M}$. There are two different cases.

Case 1. If $\bb{I}_{W_1}^{m}=\emptyset$, then $\bb{I}_{W_2}^{m}=\emptyset$ or $\bb{I}_{W_2}^{M}=\emptyset$. We assume $\bb{I}_{W_2}^{m}=\emptyset$, then $\bb{I}_{W_1}^{M}=\bb{I}_{W_2}^{M}$. For any $W\in\mathscr{H}$ and $x\in\mbR$, we define 
\be
\begin{aligned}
\mD_W^x&:=\{\rho\in\mS\ | \ \tr(W\rho)=x\},\\ 
\mM_W^x&:=\{H\in \mM_d(\bb{C})\ | \ \tr(WH)=x\},
\end{aligned}
\ee
where $\mM_d(\bb{C})$ represents the set of all $d\times d$ complex matrices. We claim that $\mD_{W_1}^M=\mD_{W_2}^M$. If not, we take $\rho\in \mD_{W_1}^M$, but $\rho\notin \mD_{W_2}^M$. If $\tr(W_2\rho)>M$, then $\rho\in\bb{I}_{W_2}^{M}$ and $\rho\notin\bb{I}_{W_1}^{M}$ which leads a contradiction. We assume a moment that $\tr(W_2\rho)<M$. Now we choose $\sigma\in\bb{I}_{W_1}^{M}$, then $\tr(W_1\sigma)>M$ and $\tr(W_2\sigma)>M$. For each $\epsilon\in[0,1]$, we let $\rho_\epsilon:=\epsilon\sigma+(1-\epsilon)\rho$. Then for small enough $\epsilon$, we have $\tr(W_1\rho_\epsilon)>M$ but $\tr(W_2\rho_\epsilon)<M$. We obtain a contradiction again. Therefore, $\mD_{W_1}^M=\mD_{W_2}^M$. According to Lemma $2$ in Ref.~\cite {Li M S2024pra}, we get $\mM_{MI-W_1}^0=\mM_{MI-W_2}^0$. Clearly, $\mM_{MI-W_1}^0$ and $\mM_{MI-W_2}^0$ are respectively orthogonal complement of ${\rm Span}\{MI-W_1\}$ and ${\rm Span}\{MI-W_2\}$ and they have the same dimension $d^2-1$. Thus we have $MI-W_1=\alpha(MI-W_2)$ for some nonzero real number $\alpha$, i.e., $W_1=\alpha W_2+(M-\alpha M)I$. Let $\beta=M-\alpha M$, we have $W_1=\alpha W_2+\beta I$. Note that $W_{1,2}\in\bb{W}_{m,M}$, which reveals $\alpha=1$, i.e, $W_1=W_2$.

Similarly, if $\bb{I}_{W_2}^{M}=\emptyset$, we have $\bb{I}_{W_1}^{M}=\bb{I}_{W_2}^{m}$. In the same way, we can show that there is a nonzero real number $s$ such that $MI-W_1=s(W_2-mI)$, i.e., $W_1=\alpha W_2+\beta I$, where $\alpha=-s$ and $\beta=M+sm$. Since $W_{1,2}\in\bb{W}_{m,M}$, we obtain $s=1$, that is, $W_1+W_2=(m+M)I$.

Case 2. If $\bb{I}_{W_1}^{m}\neq\emptyset$, then  $\bb{I}_{W_1}^{m}\subseteq\bb{I}_{W_1}=\bb{I}_{W_2}^{m}\bigcup\bb{I}_{W_2}^{M}$. Moreover, since $\bb{I}_{W_1}^{m}$ is a convex set, $\bb{I}_{W_1}^{m}\subseteq\bb{I}_{W_2}^{m}$ or $\bb{I}_{W_1}^{m}\subseteq\bb{I}_{W_2}^{M}$. If $\bb{I}_{W_1}^{m}\subseteq\bb{I}_{W_2}^{m}$, we have $\bb{I}_{W_1}^{M}\subseteq\bb{I}_{W_2}^{m}\bigcup\bb{I}_{W_2}^{M}$. Since $\bb{I}_{W_1}^{M}$ is also convex, $\bb{I}_{W_1}^{M}\subseteq\bb{I}_{W_2}^{m}$ or $\bb{I}_{W_1}^{M}\subseteq\bb{I}_{W_2}^{M}$. If $\bb{I}_{W_1}^{M}\subseteq\bb{I}_{W_2}^{m}$, then $\bb{I}_{W_1}^{m}\bigcup\bb{I}_{W_1}^{M}\subseteq\bb{I}_{W_2}^{m}$. This leads to $\bb{I}_{W_1}^{M}=\emptyset$. Therefore, we can further obtain $\bb{I}_{W_1}^{m}=\bb{I}_{W_2}^{m}$ whenever $\bb{I}_{W_1}^{M}\subseteq\bb{I}_{W_2}^{m}$. Using Lemma $2$ in Ref.~\cite {Li M S2024pra} again, we get $W_1-mI=t(W_2-mI)$, that is, $W_1=\alpha W_2+\beta I$ for some $\alpha$ and $\beta=m-\alpha m$. Together with the assumption $W_{1,2}\in\bb{W}_{m,M}$, we obtain $\alpha=1$, i.e, $W_1=W_2$. For the latter case $\bb{I}_{W_1}^{M}\subseteq\bb{I}_{W_2}^{M}$, we have $\bb{I}_{W_1}^{M}=\bb{I}_{W_2}^{M}$ and $\bb{I}_{W_1}^{m}=\bb{I}_{W_2}^{m}$ due to $\bb{I}_{W_1}=\bb{I}_{W_2}$. It also reveals $W_1=W_2$. 

If $\bb{I}_{W_1}^{m}\subseteq\bb{I}_{W_2}^{M}$, we assume first that $\bb{I}_{W_1}^{M}\subseteq\bb{I}_{W_2}^{M}$. Then there a nonzero real number $h$, such that $mI-W_1=h(MI-W_2)$. According to $W_{1,2}\in\bb{W}_{m,M}$, we can conclude that $h=1$, that is, $W_1+W_2=(m+M)I$. When $\bb{I}_{W_1}^{M}\subseteq\bb{I}_{W_2}^{m}$, we can deduce that $\bb{I}_{W_1}^{m}=\bb{I}_{W_2}^{M}$ and $\bb{I}_{W_1}^{M}=\bb{I}_{W_2}^{m}$, which implies $W_1+W_2=(m+M)I$.

Conversely, if $W_1=W_2$, it is obvious that $\bb{I}_{W_1}=\bb{I}_{W_2}$. If $W_1+W_2=(m+M)I$, then $\tr(W_1\rho)<m$ (resp.  $\tr(W_1\rho)>M$) if and only of $\tr(W_2\rho)>M$ (resp. $\tr(W_2\rho)<m$). Hence, $\bb{I}_{W_1}=\bb{I}_{W_2}$.

{\rm(2)} 
Let $W_1':=W_1-aI,W_2':=W_2-aI$. Then
\beax
\bb{I}_{W_1}&=&\{\rho\in\mS\ |\ \tr(W_1\rho)\neq a\}=\{\rho\in\mS\ |\ \tr(W_1'\rho)\neq 0\},\\ 
\bb{I}_{W_2}&=&\{\rho\in\mS\ |\ \tr(W_2\rho)\neq a\}=\{\rho\in\mS\ |\ \tr(W_2'\rho)\neq 0\}.
\eeax 
If $\bb{I}_{W_1}=\bb{I}_{W_2}$, we can obtain $\mD_{W_1'}^0=\mD_{W_2'}^0$. According to Lemma $2$ in Ref.~\cite {Li M S2024pra}, we can show that $W_2'=rW_1'$ for some $r\neq0$, i.e., $W_1-aI=r(W_2-aI)$. Conversely, it is clearly valid.

If $W_1-aI\neq r(W_2-aI)$. We assume with no loss of generality that $\bb{I}_{W_1}\subseteq\bb{I}_{W_2}$. According to the definition of $\mM_W^x$, we have $\mM_{W_2'}^0\subseteq\mM_{W_1'}^0$. Both of them are
linear spaces with the same dimension $d^2-1$. This reveals $\mM_{W_2'}^0=\mM_{W_1'}^0$. By the argument above as in (1), we deduce that $W_2'=rW_1'$, which contradicts the assumption.    
\end{proof}


\section{The finer relation of witnesses} \label{sec7}


Given $W_1,\ W_2\in\bb{W}_{m,M}$, similar to that of the finer relation of the original imaginarity witnesses in Ref.~\cite{Li N 2025 pla}, we say that $W_1$ is finer than $W_2$ if $\bb{I}_{W_2}\subseteq\bb{I}_{W_1}$. 
The finer relation of the original imaginarity witnesses was explored in Ref.~\cite{Li N 2025 pla}. 
In the following, we study such a relation in $\bb{W}_{m,M}$. We begin with the following lemma.

\begin{lemma}\label{L5}
Let $W_1, W_2\in\bb{W}_{m,M}$ with $M\neq m$. 
If $\bb{I}_{W_1}^{M}\subseteq\bb{I}_{W_2}^{m}$, we define
\bea\label{lambda}
\lambda:=\inf_{\rho\in\bb{I}_{W_1}^{M}}\frac{|\tr[(W_2-mI)\rho]|}{|\tr[(MI-W_1)\rho]|}.   
\eea 
Then the following statements are true:

(1) If $\tr(W_1\rho)=M$, then $\tr(W_2\rho)\leq m$.

(2) $\lambda>0$.

(3) If $\tr(W_1\rho)<M$, then $\tr[(W_2-mI)\rho]\leq\lambda\tr[(MI-W_1)\rho]$.
\end{lemma}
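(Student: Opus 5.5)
Throughout I write $g(\rho):=m-\tr(W_2\rho)$ and $h(\rho):=\tr(W_1\rho)-M$. Both are affine in $\rho$, so they are affine along segments in $\mS$. The hypothesis $\bb{I}_{W_1}^{M}\subseteq\bb{I}_{W_2}^{m}$ then reads: $h(\rho)>0$ implies $g(\rho)>0$. On $\bb{I}_{W_1}^{M}$ the ratio in \eqref{lambda} equals $g(\rho)/h(\rho)$. I assume $\bb{I}_{W_1}^{M}\neq\emptyset$; otherwise $\lambda=+\infty$ and all three statements are trivial. All three parts use one device: mix the state in question with a suitably chosen second state and use that $g$ and $h$ are affine along the segment.

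For (1), let $\tr(W_1\rho)=M$, i.e.\ $h(\rho)=0$, and fix $\sigma\in\bb{I}_{W_1}^{M}$. For every $\epsilon\in(0,1]$ the state $\rho_\epsilon=(1-\epsilon)\rho+\epsilon\sigma$ has $h(\rho_\epsilon)=\epsilon h(\sigma)>0$. Hence $\rho_\epsilon\in\bb{I}_{W_2}^{m}$, so $g(\rho_\epsilon)>0$. Letting $\epsilon\to0$ and using continuity gives $g(\rho)\ge0$, i.e.\ $\tr(W_2\rho)\le m$.

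For (2), which I expect to be the main point, the difficulty is that $g$ and $h$ can both tend to $0$ near the boundary of $\bb{I}_{W_1}^{M}$, so a naive bound fails. Instead I will show the quantitative statement $g(\rho)\ge h(\rho)$ on $\bb{I}_{W_1}^{M}$, which gives $\lambda\ge1$. Take a real unit eigenvector of $\rRe(W_2)$ for its largest eigenvalue $M$; this is possible because $\rRe(W_2)$ is a real symmetric matrix. The projector onto it is a real state $\delta'$ with $\tr(W_2\delta')=M$, so $g(\delta')=m-M<0$. By the first Lemma, $\tr(W_1\delta')\ge m$, so $-h(\delta')\le M-m$. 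Now suppose some $\rho\in\bb{I}_{W_1}^{M}$ has $g(\rho)<h(\rho)$. Put $\rho_t=(1-t)\rho+t\delta'$ and $s=t/(1-t)$. Then $g(\rho_t)<0$ as soon as $s>g(\rho)/(M-m)$. Also $h(\rho_t)>0$ as soon as $s<h(\rho)/(M-m)$, since $(1-t)h(\rho)+t\,h(\delta')\ge(1-t)h(\rho)-t(M-m)$. Because $g(\rho)<h(\rho)$, such an $s$ exists. The corresponding state $\rho_t$ lies in $\bb{I}_{W_1}^{M}$ but not in $\bb{I}_{W_2}^{m}$, a contradiction. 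Here $M\neq m$ is used.

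For (3), let $h(\rho)<0$ and take any $\sigma\in\bb{I}_{W_1}^{M}$. Choose $t\in(0,1)$ with $(1-t)h(\rho)+t\,h(\sigma)=0$, i.e.\ $t/(1-t)=-h(\rho)/h(\sigma)$. By (1), the state $\rho_t=(1-t)\rho+t\sigma$ satisfies $g(\rho_t)\ge0$, so $(1-t)g(\rho)+t\,g(\sigma)\ge0$. Dividing by $1-t$ gives
\[
-g(\rho)\le\frac{t}{1-t}\,g(\sigma)=\frac{g(\sigma)}{h(\sigma)}\,\bigl(-h(\rho)\bigr).
\]
Taking the infimum over $\sigma\in\bb{I}_{W_1}^{M}$ yields $\tr[(W_2-mI)\rho]\le\lambda\,\tr[(MI-W_1)\rho]$, as required.
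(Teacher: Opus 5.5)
Your parts (1) and (3) follow the paper's own arguments. In (1) the paper writes the mixture as $(\rho_1+a\rho)/(1+a)$ with $a\to\infty$. In (3) it builds the same state on the hyperplane $\tr(W_1\,\cdot)=M$, applies (1), and takes the infimum over $\rho_1\in\mathbb{I}_{W_1}^{M}$.

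Part (2) takes a genuinely different and sharper route. The paper argues by contradiction:
\begin{itemize}
\item it picks $\rho_n\in\mathbb{I}_{W_1}^{M}$ with $g(\rho_n)/h(\rho_n)\to0$;
\item it mixes each $\rho_n$ with a real state $\delta$ satisfying $\tr(W_1\delta)<M$ and $\tr(W_2\delta)>m$, until the mixture lies on $\tr(W_1\,\cdot)=M$;
\item it then contradicts (1) for large $n$.
\end{itemize}
The paper only asserts that such a $\delta$ exists; $\delta=I/d$ works because $m<M$.

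You instead mix with the real state $\delta'$ at which $\rRe(W_2)$ attains $M$, and you use that both witnesses share the interval $[m,M]$. Along the segment, $g$ decreases at rate exactly $M-m$ while $h$ decreases at rate at most $M-m$, so $g\ge h$ holds pointwise on $\mathbb{I}_{W_1}^{M}$. This buys several things:
\begin{itemize}
\item it avoids sequences and does not use (1);
\item it gives the explicit bound $\lambda\ge1$, which is sharp (equality when $W_1+W_2=(m+M)I$);
\item choosing the appropriate extremal real eigenvector of $\rRe(W_2)$ gives $\eta,\xi,\zeta\ge1$ in the three analogous cases listed after the lemma.
\end{itemize}
The paper's version needs less structure, only some real state strictly inside both intervals, but it yields only $\lambda>0$.

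One correction. You say that when $\mathbb{I}_{W_1}^{M}=\emptyset$ ``all three statements are trivial.'' That is true for (2) and (3), but (1) can then fail. For $d=3$, let $W_1=W_2$ be $\mathrm{diag}(M,m,m)$ plus purely imaginary entries ${\rm i}c$ and $-{\rm i}c$ in positions $(2,3)$ and $(3,2)$, with $0<c\le M-m$.
\begin{itemize}
\item Its eigenvalues are $M$ and $m\pm c$, so $W_1\in\mathbb{W}_{m,M}$.
\item Since the largest eigenvalue is $M$, $\mathbb{I}_{W_1}^{M}=\emptyset$, so the hypothesis holds vacuously.
\item Yet the first basis state gives $\tr(W_1\rho)=M$ and $\tr(W_2\rho)=M>m$.
\end{itemize}
So $\mathbb{I}_{W_1}^{M}\neq\emptyset$ must be a standing hypothesis of the lemma, not a trivial case to dismiss. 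The paper's proof of (1) tacitly assumes it as well.
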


\begin{proof}
(1) Suppose $\tr(W_2\rho)>m$, then, for any $\rho_1\in\bb{I}_{W_1}^{M}$ and $a\geq0$, we have     
\beax
\rho_a=(\rho_1+a\rho)/(1+a)\in\bb{I}_{W_1}^{M}.    
\eeax 
On the other hand, there exists a positive number $a_0$ such that $\tr(W_2\rho_a)>m$ holds for all $a\geq a_0$, which is impossible since it leads to $\rho_a\notin\bb{I}_{W_2}^{m}$.

(2) If $\lambda=0$, there exists a sequence $\{\rho_n\}\subseteq\bb{I}_{W_1}^{M}$ such that
\be \label{6}
\epsilon_n=\frac{\tr[(W_2-mI)\rho_n]}{\tr[(MI-W_1)\rho_n]}\to0,\ n\to\infty.  
\ee
Note that there exists a real state $\delta$ such that
both $\tr[(MI-W_1)\delta]>0$ and $\tr[(W_2-mI)\delta]>0$ are nonzero. 
\if false If not, then for any real state $\delta$, either $\tr[(MI-W_1)\delta]=0$ or $\tr[(W_2-mI)\delta]=0$. Thus, by Lemma~\ref{L4}, there exist $\delta_1, \delta_2\in\mR$ so that $\tr[(MI-W_1)\delta_1]=t>0$, $\tr[(MI-W_1)\delta_2]=0$, $\tr[(W_2-mI)\delta_1]=0$ and $\tr[(W_2-mI)\delta_2]=s>0$. Let   $\delta=\frac{s}{t+s}\delta_1+\frac{t}{t+s}\delta_2$. Then $\tr[(MI-W_1)\delta]=\tr[(W_2-mI)\delta_2]=\frac{ts}{t+s}\neq0$,
contradicting to the assumption.
\fi 
We let
\beax \widetilde{\rho}_n:=\frac{1}{1-\frac{\tr[(MI-W_1)\rho_n]}{\tr[(MI-W_1)\delta]}}\left[\rho_n-\frac{\tr[(MI-W_1)\rho_n]}{\tr[(MI-W_1)\delta]}\delta\right] 
\eeax
with $\rho_n$ satisfying Eq.~\eqref{6}. Then $\tr(W_1\widetilde{\rho}_n)=M$, and by item $(1)$, we have 
$\tr(W_2\widetilde{\rho}_n)\leq m$ for each $n$. However,
\begin{widetext}
\beax 
&&\tr[(W_2-mI)\widetilde{\rho}_n]\\
&=&\frac{1}{1-\frac{\tr[(MI-W_1)\rho_n]}{\tr[(MI-W_1)\delta]}}\left[\tr[(W_2-mI)\rho_n]-\frac{\tr[(MI-W_1)\rho_n]}{\tr[(MI-W_1)\delta]}\tr[(W_2-mI)\delta]\right]\\
&=&\frac{1}{1-\frac{\tr[(MI-W_1)\rho_n]}{\tr[(MI-W_1)\delta]}}\left[\epsilon_n-\frac{\tr[(W_2-mI)\delta]}{\tr[(MI-W_1)\delta]}\right]\tr[(MI-W_1)\rho_n],
\eeax  
\end{widetext}
and $\epsilon_n\to0$, which implies that for sufficiently large $n$, $\epsilon_n<\frac{\tr[(W_2-mI)\delta]]}{\tr[(MI-W_1)\delta]}$
and hence $\tr(W_2\widetilde{\rho}_n)>m$, a contradiction.

(3) If $\tr(W_1\rho)<M$, we take $\rho_1\in\bb{I}_{W_1}^{M}$ and let 
\begin{widetext}
\beax
\widetilde{\rho}:=\frac{1}{\tr[(MI-W_1)\rho]-\tr[(MI  -W_1)\rho_1]}\left[\tr[(MI-W_1)\rho]\rho_1\right. 
\left. -\tr[(MI-W_1)\rho_1]\rho\right].
\eeax
\end{widetext}
Then we have
$\tr(W_1\widetilde{\rho})=M$. By item (1), we obtain that $\tr(W_2\widetilde{\rho})\leq m$, and thus 
\begin{widetext}
\beax 
\tr[(MI-W_1)\rho]\tr[(W_2-mI)\rho_1]\leq\tr[(MI-W_1)\rho_1]\tr[(W_2-mI)\rho].
\eeax  
\end{widetext}
It follows that     
\bea
\frac{\tr[(W_2-mI)\rho]}{\tr[(MI-W_1)\rho]}\leq\frac{|\tr[(W_2-mI)\rho_1]|}{|\tr[(MI-W_1)\rho_1]|}.   
\eea 
Taking the infimum with respect to $\rho_1\in\bb{I}_{W_1}^{M}$ on the right side
of the above equation, we get $\tr[(W_2-mI)\rho]\leq\lambda\tr[(MI-W_1)\rho]$.  	
\end{proof}

Similarly, for the other cases of $\bb{I}_{W_1}^{M}\subseteq\bb{I}_{W_2}^{M}$, $\bb{I}_{W_1}^{m}\subseteq\bb{I}_{W_2}^{m}$, and $\bb{I}_{W_1}^{m}\subseteq\bb{I}_{W_2}^{M}$, respectively, we have the following:

(1) For $\bb{I}_{W_1}^{M}\subseteq\bb{I}_{W_2}^{M}$, we define
\bea
\eta:=\inf_{\rho\in\bb{I}_{W_1}^{M}}\frac{|\tr[(MI-W_2)\rho]|}{|\tr[(MI-W_1)\rho]|}.   
\eea
Then, (i) if $\tr(W_1\rho)=M$, then $\tr(W_2\rho)\geq M$; (ii) $\eta>0$; (iii) if $\tr(W_1\rho)<M$, then $\tr[(MI-W_2)\rho]\leq\eta\tr[(MI-W_1)\rho]$.

(2) For $\bb{I}_{W_1}^{m}\subseteq\bb{I}_{W_2}^{m}$, we define
\bea
\xi:=\inf_{\rho\in\bb{I}_{W_1}^{m}}\frac{|\tr[(W_2-mI)\rho]|}{|\tr[(W_1-mI)\rho]|}.   
\eea
Then, (i) if $\tr(W_1\rho)=m$, then $\tr(W_2\rho)\leq m$; (ii) $\xi>0$; (iii) if $\tr(W_1\rho)>m$, then $\tr[(W_2-mI)\rho]\leq\xi\tr[(W_1-mI)\rho]$.

(3) For $\bb{I}_{W_1}^{m}\subseteq\bb{I}_{W_2}^{M}$, we define
\bea
\zeta:=\inf_{\rho\in\bb{I}_{W_1}^{m}}\frac{|\tr[(MI-W_2)\rho]|}{|\tr[(W_1-mI)\rho]|}.   
\eea
Then, (i) if $\tr(W_1\rho)=m$, then $\tr(W_2\rho)\geq M$; (ii) $\zeta>0$; (iii) if $\tr(W_1\rho)>m$, then $\tr[(MI-W_2)\rho]\leq\zeta\tr[(W_1-mI)\rho]$.

\begin{theorem}\label{th4}
Let $W_1, W_2\in\bb{W}_{m,M}$. If $\bb{I}_{W_1}\subseteq\bb{I}_{W_2}$, then there exists real numbers $\alpha$, $\beta$, $\gamma$ and a positive operator $P$, such that $W_1=\alpha W_2+\beta I+\gamma P$, where $\alpha\neq 0$ and $\gamma=1\ \text{or}\ -1$. In particular, if $m=M$ additionally, $W_1=\alpha W_2+\beta I$.
\end{theorem}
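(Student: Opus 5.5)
We argue in the spirit of the finer-relation results for entanglement and coherence witnesses, using Lemma~\ref{L5} and its three companion statements. Throughout, $\bb{I}_{W_1}^{m}$ and $\bb{I}_{W_1}^{M}$ are convex and disjoint, and likewise for $W_2$.

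\emph{Case $m\neq M$.} First, assume $\bb{I}_{W_1}\subseteq\bb{I}_{W_2}$. Since $\bb{I}_{W_1}^{M}$ is convex and contained in the disjoint union $\bb{I}_{W_2}^{m}\cup\bb{I}_{W_2}^{M}$ of two convex sets, it lies in one of them. Indeed, if it met both, the segment argument of Lemma~\ref{L3} would apply: continuity of $t\mapsto\tr(W_2\rho_t)$ along a segment forces a point with $\tr(W_2\rho_t)\in[m,M]$. The same reasoning applies to $\bb{I}_{W_1}^{m}$. Both sets cannot be empty, because $W_1$ is a witness, so at least one of the four inclusions in Lemma~\ref{L5} and items (1)--(3) after it holds.

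Take first the inclusion $\bb{I}_{W_1}^{M}\subseteq\bb{I}_{W_2}^{m}$ from Lemma~\ref{L5}. Item (3) gives $\tr[(W_2-mI)\rho]\leq\lambda\tr[(MI-W_1)\rho]$ for every state with $\tr(W_1\rho)<M$. For states with $\tr(W_1\rho)\ge M$, we write $\lambda$ as an infimum and use item (1) to get the same inequality. Indeed, on $\bb{I}_{W_1}^{M}$ we have $\tr[(W_2-mI)\rho]<0$ and $\tr[(MI-W_1)\rho]<0$, so the ratio being at least $\lambda$ means $\tr[(W_2-mI)\rho]\le\lambda\tr[(MI-W_1)\rho]$. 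Hence
\[
\tr\bigl[\bigl(\lambda(MI-W_1)-(W_2-mI)\bigr)\rho\bigr]\ge 0 \quad\text{for all } \rho\in\mS .
\]
So $P:=\lambda(MI-W_1)-W_2+mI\geq0$, and solving for $W_1$ gives
\[
W_1=-\tfrac{1}{\lambda}W_2+\tfrac{\lambda M+m}{\lambda}I-\tfrac{1}{\lambda}P .
\]
Absorbing $1/\lambda>0$ into $P$ gives the claimed form with $\alpha=-1/\lambda\neq0$ and $\gamma=-1$; item (2), $\lambda>0$, is exactly what makes this division legitimate.

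The other three inclusions are handled the same way using $\eta$, $\xi$, $\zeta$. For instance, $\bb{I}_{W_1}^{M}\subseteq\bb{I}_{W_2}^{M}$ gives $\eta(MI-W_1)-(MI-W_2)\geq0$, so $W_1=\frac{1}{\eta}W_2+\beta I-\frac1\eta P$ and again $\gamma=-1$. The $\bb{I}_{W_1}^{m}$ cases give $\xi(W_1-mI)-(W_2-mI)\ge0$, so $W_1=\frac1\xi W_2+\beta I+\frac1\xi P$ with $\gamma=+1$. In each case only one of the two inclusions is needed.

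\emph{Case $m=M=a$.} Here $\bb{I}_{W_i}=\{\rho:\tr[(W_i-aI)\rho]\neq0\}$. The inclusion $\bb{I}_{W_1}\subseteq\bb{I}_{W_2}$ implies $\mD^0_{W_2-aI}\subseteq\mD^0_{W_1-aI}$. By the dimension argument in the proof of Theorem~\ref{th3}(2), via Lemma 2 of Ref.~\cite{Li M S2024pra}, we get $\mM^0_{W_2-aI}\subseteq \mM^0_{W_1-aI}$; both are hyperplanes of dimension $d^2-1$, so they coincide. Hence $W_1-aI=r(W_2-aI)$ with $r\ne0$, i.e.\ $W_1=\alpha W_2+\beta I$ with $\alpha=r$ and $\beta=a(1-r)$.

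\emph{Main obstacle.} The step needing the most care is extending the inequality from item (3) to all states, in particular the boundary states with $\tr(W_1\rho)=M$ (resp.\ $=m$). For these we will invoke item (1) and then pass to the closure. We must also make sure the sign bookkeeping matches the sign of $\alpha$. A related subtlety is that the positive-cone conclusion needs the inequality on every state, not only on imaginary ones. This holds because the inequalities in Lemma~\ref{L5} were stated for arbitrary $\rho\in\mS$.
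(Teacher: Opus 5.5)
Your proposal is correct and follows essentially the paper's own proof: for $m\neq M$ you take a nonempty $\bb{I}_{W_1}^{m}$ or $\bb{I}_{W_1}^{M}$, place it inside $\bb{I}_{W_2}^{m}$ or $\bb{I}_{W_2}^{M}$ by convexity, and let Lemma~\ref{L5} (or its companions) produce the positive operator after dividing by $\lambda,\eta,\xi,\zeta>0$, while for $m=M$ you use the hyperplane argument of Theorem~\ref{th3}(2). You are a bit more explicit than the paper: you check the inequality on $\bb{I}_{W_1}^{M}$ and at the boundary $\tr(W_1\rho)=M$, and you observe that a single inclusion suffices; note also that the general form $W_1=\alpha W_2+\beta I+\gamma P$ holds for any two Hermitian operators (take $\alpha=\gamma=1$, $\beta=\lambda_{\min}(W_1-W_2)$), so the theorem's real content is the $m=M$ conclusion.
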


\begin{proof}
There are two cases: (1) $m\neq M$ and (2) $m=M$. 

(1) We assume that $m\neq M$. Since $\bb{I}_{W_1}=\bb{I}_{W_1}^{m}\bigcup\bb{I}_{W_1}^{M},\bb{I}_{W_2}=\bb{I}_{W_2}^{m}\bigcup\bb{I}_{W_2}^{M}$,  where $\bb{I}_{W_1}^{m}$ and $\bb{I}_{W_1}^{M}$ are two disjoint convex sets, so are $\bb{I}_{W_2}^{m}$ and $\bb{I}_{W_2}^{M}$.

Case 1. $\bb{I}_{W_1}^{m}=\emptyset$. In such a case, $\bb{I}_{W_1}^{M}\subseteq\bb{I}_{W_2}^{m}$ or $\bb{I}_{W_1}^{M}\subseteq\bb{I}_{W_2}^{M}$. If $\bb{I}_{W_1}^{M}\subseteq\bb{I}_{W_2}^{m}$, by Lemma~\ref{L5},  
\bea
\tr[(W_2-mI)\rho]\leq\lambda\tr[(MI-W_1)\rho]
\eea 
holds for all quantum states $\rho$, where $\lambda$ is defined as in Eq.~\eqref{lambda}.
This implies that $D_1=\lambda(MI-W_1)-(W_2-mI)\geq0$. Let $P_1=\lambda^{-1}D_1$ and $r=\lambda^{-1}$, then $MI-W_1=r(W_2-mI)+P_1$. That is, $W_1=\alpha W_2+\beta I+\gamma P$, where $\alpha=-r$, $\beta=rm+M$ and $\gamma=-1$. 

Similarly, if $\bb{I}_{W_1}^{M}\subseteq\bb{I}_{W_2}^{M}$, we can show that there exists $s>0$ and $P_2\geqslant0$, such that $MI-W_1=s(MI-W_2)+P_2$, i.e., $W_1=\alpha W_2+\beta I+\gamma P$, where $\alpha=s$, $\beta=M-sM$ and $\gamma=-1$.

Case 2. If $\bb{I}_{W_1}^{m}\neq\emptyset$, then  $\bb{I}_{W_1}^{m}\subseteq\bb{I}_{W_1}\subseteq\bb{I}_{W_2}^{m}\bigcup\bb{I}_{W_2}^{M}$. Moreover, since $\bb{I}_{W_1}^{m}$ is a convex set, we can obtain $\bb{I}_{W_1}^{m}\subseteq\bb{I}_{W_2}^{m}$ or $\bb{I}_{W_1}^{m}\subseteq\bb{I}_{W_2}^{M}$. If $\bb{I}_{W_1}^{m}\subseteq\bb{I}_{W_2}^{m}$, we have $\bb{I}_{W_1}^{M}\subseteq\bb{I}_{W_2}^{m}\bigcup\bb{I}_{W_2}^{M}$. Since $\bb{I}_{W_1}^{M}$ is also a convex set,  $\bb{I}_{W_1}^{M}\subseteq\bb{I}_{W_2}^{m}$ or $\bb{I}_{W_1}^{M}\subseteq\bb{I}_{W_2}^{M}$. If $\bb{I}_{W_1}^{M}\subseteq\bb{I}_{W_2}^{m}$, then $\bb{I}_{W_1}^{m}\bigcup\bb{I}_{W_1}^{M}\subseteq\bb{I}_{W_2}^{m}$. Obviously, $\bb{I}_{W_1}^{M}=\emptyset$. Therefore, we can further obtain $\bb{I}_{W_1}^{m}\subseteq\bb{I}_{W_2}^{m}$ whenever $\bb{I}_{W_1}^{M}\subseteq\bb{I}_{W_2}^{m}$, so there exists a $t>0$ and $P_3\geqslant0$, such that $W_1-mI=t(W_2-mI)+P_3$, i,e, $W_1=\alpha W_2+\beta I+\gamma P$ where $\alpha=t$, $\beta =m-tm$ and $\gamma=1$. For the latter case $\bb{I}_{W_1}^{M}\subseteq\bb{I}_{W_2}^{M}$, we have  $\bb{I}_{W_1}^{m}\subseteq\bb{I}_{W_2}^{m}$. Repeat the above discussion, we get the same conclusion.

If $\bb{I}_{W_1}^{m}\subseteq\bb{I}_{W_2}^{M}$, we first assume that $\bb{I}_{W_1}^{M}\subseteq\bb{I}_{W_2}^{M}$. Then there exists $h>0$ and $P_4\geqslant0$ such that $W_1-mI=h(MI-W_2)+P_4$, that is, $W_1=\alpha W_2+\beta I+\gamma P$ where $\alpha=-h$, $\beta=hM+m$, and $\gamma=1$. When $\bb{I}_{W_1}^{M}\subseteq\bb{I}_{W_2}^{m}$, there are positive real numberss $u$, $v$ and  $P_5\geqslant0$ and $P_6\geqslant0$, such that $MI-W_1=u(W_2-mI)+P_5$ and $W_1-mI=v(MI-W_2)+P_6$. That is, we still get $W_1=\alpha W_2+\beta I+\gamma P$ for some $\alpha$, $\beta$ and $\gamma$.

(2) If $m=M$, according to the second part of Theorem~\ref{th3} $\rm(2)$, it is obvious that the statement holds. 
\end{proof}

Theorem~\ref{th4} is different from the corresponding result for both entanglement witnesses~\cite{Wu Y C2006pla,Hou J2010pra,Lewenstein2000pra} and coherence witnesses~\cite{Li M S2024pra,Wang2021qip} in which there are both sufficient and necessary conditions for the finer relation of two witnesses. We now give a counterexample to show that, the converse of this theorem is not true.

\begin{example}
Let 
\beax
W_1=\begin{pmatrix}
	1&\rm i&\rm i\\
	-\rm i&0&0\\
	-\rm i&0&1\\
\end{pmatrix},~
W_2=\begin{pmatrix}
	1&\rm i&0\\
	-\rm i&0&0\\
	0&0&0\\
\end{pmatrix},
\eeax
\beax
P=\begin{pmatrix}
	2&0&\rm i\\
	0&2&0\\
	-\rm i&0&3\\
\end{pmatrix}. 
\eeax
Then $W_1,\ W_2\in\bb{W}_{0,1}$ and $W_1=W_2-2I+P$. Taking
\beax
\rho=\begin{pmatrix}
	{1}/{2}&{\rm i}/{6}&{\rm i}/{6}\\
	-{\rm i}/{6}&{1}/{6}&{\rm i}/{6}\\
	-{\rm i}/{6}&-{\rm i}/{6}&{1}/{3}\\
\end{pmatrix},
\eeax
we have $\tr(W_1\rho)=\frac{3}{2}\notin[0,1]$ and $\tr(W_2\rho)=\frac{5}{6}\in[0,1]$. Hence, $\rho\in\bb{I}_{W_1}$, but $\rho\notin\bb{I}_{W_2}$, that is, $\bb{I}_{W_1}\nsubseteq\bb{I}_{W_2}$.
\end{example}

\begin{example}
Let 
\beax
W_1=\begin{pmatrix}
	1&0&\rm i\\
	0&1&\rm i\\
	-\rm i&-\rm i&1\\
\end{pmatrix},~
W_2=\begin{pmatrix}
	1&\rm i&0\\
	-\rm i&1&0\\
	0&0&1\\
\end{pmatrix},
\eeax
\beax			
P=\begin{pmatrix}
	2&-\rm i&\rm i\\
	\rm i&2&\rm i\\
	-\rm i&-\rm i&2\\
\end{pmatrix}.			
\eeax 
Then $W_1,\ W_2\in\bb{W}_{1,1}$, $W_1=W_2-2I+P$. We consider
\beax
\rho=\begin{pmatrix}
	{1}/{2}&{\rm i}/{6}&0\\
	-{\rm i}/{6}&{1}/{3}&0\\
	0&0&{1}/{6}\\
\end{pmatrix},
\eeax
which leads to  
\beax 
\tr(W_1\rho)=1\ \text{and}\ \tr(W_2\rho)=\frac{4}{3}\neq1.
\eeax 
i.e., $\bb{I}_{W_1}\nsubseteq\bb{I}_{W_2}$.	
\end{example}

As in Ref.~\cite {Li N 2025 pla}, we call $W$ an optimal imaginarity witness in $\bb{W}_{m,M}$ if there is no imaginarity witness that can be finer than $W$. Next, we discuss the conditions under which the witness in $\bb{W}_{m,M}$ is optimal. By Theorem~\ref{th4}, we directly derive the following.

\begin{theorem}
Let $W\in\bb{W}_{m,M}$. If for any positive operator $P$ and real numbers $\alpha$, $\beta$, $\gamma$, where $\alpha\neq0$, such that $\alpha W+\beta I+\gamma P\notin\bb{W}_{m,M}$, then $W$ is optimal. 	
\end{theorem}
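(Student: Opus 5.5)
The plan is to argue by contraposition, using Theorem~\ref{th4} as the only substantive input. Suppose $W\in\bb{W}_{m,M}$ is \emph{not} optimal. By the definition of optimality, some imaginarity witness is finer than $W$. Since the finer relation was defined only between elements of the same class, this witness is some $W'\in\bb{W}_{m,M}$ with $\bb{I}_{W}\subseteq\bb{I}_{W'}$.

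We then apply Theorem~\ref{th4} with $W_1=W$ and $W_2=W'$. It gives real numbers $\alpha'\neq0$, $\beta'$, $\gamma'\in\{1,-1\}$ and a positive operator $P'$ with
\[
W=\alpha' W'+\beta' I+\gamma' P' .
\]
The goal is to express $W'$ in the form $\alpha W+\beta I+\gamma P$ with $\alpha\neq0$ and $P\geq0$. Solving, we get $W'=\frac{1}{\alpha'}W-\frac{\beta'}{\alpha'}I-\frac{\gamma'}{\alpha'}P'$. So we set $\alpha=1/\alpha'\neq0$, $\beta=-\beta'/\alpha'$, $\gamma=-\gamma'/\alpha'$ and $P=P'$. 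This exhibits $\alpha W+\beta I+\gamma P=W'\in\bb{W}_{m,M}$, which contradicts the hypothesis that no such combination lies in $\bb{W}_{m,M}$. Hence $W$ is optimal.

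The one point to watch is that the hypothesis quantifies over arbitrary real $\gamma$, while Theorem~\ref{th4} produces only $\gamma'=\pm1$. Inverting the relation gives $\gamma=-\gamma'/\alpha'$, which is a nonzero real number but generally not $\pm1$. Because the statement allows any real $\gamma$, this causes no difficulty. Alternatively, one can absorb $|\gamma|$ into $P$ by rescaling, since $|\gamma|P'\geq0$, and so recover $\gamma=\pm1$ if that normalization is preferred.

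The case $m=M$ follows from the same argument. There Theorem~\ref{th4} gives $W=\alpha'W'+\beta'I$, which is the above with $P'=0$ (or any choice with $\gamma'=0$ absorbed). The inversion then yields $W'=\alpha W+\beta I+\gamma P$ with $P=0$, which is again excluded by the hypothesis.

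I do not expect a real obstacle, since the result is a direct corollary of Theorem~\ref{th4}. The only care needed is to check that the inversion keeps $\alpha\neq0$ and $P\geq0$, which it does.
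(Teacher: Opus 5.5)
Your argument is correct and follows the paper's route. The paper only says the result follows directly from Theorem~\ref{th4}; your contraposition supplies the step it leaves implicit, namely applying that theorem with $W_1=W$, $W_2=W'$ and then solving for $W'$, which keeps $\alpha=1/\alpha'\neq0$ and $P=P'\geq0$. Be aware, though, that the hypothesis can never hold: already $\alpha=1$, $\beta=\gamma=0$ returns $W\in\bb{W}_{m,M}$ itself, and more generally every Hermitian $H$ equals $W+\beta I+P$ with $\beta=\lambda_{\min}(H-W)$ and $P=H-W-\beta I\geq0$, so the statement is vacuously true and the argument, yours and the paper's alike, is valid but has no real content.
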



\section{Witnessed imaginarity} \label{sec8}


In the resource theory of entanglement and coherence, witnesses can be used for quantifying the ``amount'' of the resource contained in the resource state~\cite{Ren H2017alp,Brandao2005pra}, which were called witnessed entanglement~\cite{Brandao2005pra} and witnessed coherence~\cite{Ren H2017alp}, respectively. They always coincide with some other known measures. For example, both the negativity and concurrence can be represented by some special families of entanglement witnesses~\cite{Brandao2005pra,Verstraete2002thesis}, and the $l_1$-norm of coherence is just the witnessed coherence~\cite{Ren H2017alp}. We discuss below whether the imaginarity witnesses can educe an imaginarity measure.

We define the imaginarity witness by
\be 
\mI_W(\rho)=\max_{\|W\|\leq1}I_W(\rho),~\forall\rho\in\mS,
\ee
where 
\be 
I_W(\rho)=\max\{0,m-\tr(W\rho)\}+\max\{0,\tr(W\rho)-M\},
\ee 
${m}$ and ${M}$ denote the minimum and maximum eigenvalues value of $\rRe(W)$ respectively, 
and where the maximum runs over all imaginarity witnesses $W$'s with $\|W\|\leq1$.

\begin{lemma}\label{L6}
For any nonreal Hermitian $X$ under the reference basis and any $H\in\mathscr{H}$,
\bea\max_{\|W\|\leq1}\tr(WX)=\max_{\|H\|\leq1}\tr(HX)=\|X\|_{\tr}.
\eea
\end{lemma}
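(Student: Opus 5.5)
The plan is to prove the second equality first, since it is the standard duality between the operator norm and the trace norm. Then I obtain the first equality by comparing the two maximization domains. Throughout, $\|\cdot\|$ is the operator norm. As in the definition of $\mI_W$, the left-hand maximum ranges over imaginarity witnesses, i.e. Hermitian $W$ with $\rIm(W)\neq0$; by Theorem~\ref{th1} and Definition~\ref{D1}, these are exactly the nonreal Hermitian operators.

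\textbf{Step 1: $\max_{\|H\|\leq1}\tr(HX)=\|X\|_{\tr}$.} Write the spectral decomposition $X=\sum_j x_j|e_j\ra\la e_j|$.
\begin{itemize}
\item Upper bound: for any Hermitian $H$ with $\|H\|\leq1$, we have $\tr(HX)=\sum_j x_j\la e_j|H|e_j\ra$ and $|\la e_j|H|e_j\ra|\leq1$. Hence $\tr(HX)\leq\sum_j|x_j|=\|X\|_{\tr}$ (Hölder's inequality).
\item Attainment: let $P_\pm$ be the spectral projections onto the positive and negative eigenspaces of $X$, and put $H_0:=P_+-P_-+K_0$, where $K_0$ is any Hermitian operator supported on $\ker X$ with $\|K_0\|\leq1$. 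Then $\|H_0\|\leq1$ and $\tr(H_0X)=\|X\|_{\tr}$.
\end{itemize}

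\textbf{Step 2: comparison of the two maxima.} Every imaginarity witness is a Hermitian operator, so the witness maximum is at most $\|X\|_{\tr}$. For the reverse inequality I argue by cases.
\begin{itemize}
\item If $P_+-P_-$ is already nonreal, it is itself an admissible witness and the value is attained.
\item If it is real but $\ker X\neq\{0\}$, I use the freedom in $K_0$. Choosing $K_0$ nonreal is impossible when the kernel is one-dimensional, so instead I look for a nonreal $H_0$ with the same action on ${\rm supp}(X)$. Such an $H_0$ must exist unless every admissible maximizer is real.
\item In the remaining case I perturb. Fix a nonreal Hermitian $K$ with $\|K\|\leq1$, for instance $K=2W_{k,l}$ from the proof of Theorem~\ref{th1}. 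Set $W_\epsilon:=(1-\epsilon)H_0+\epsilon K$ for $\epsilon\in(0,1]$. Then $\|W_\epsilon\|\leq1$, $\rIm(W_\epsilon)=\epsilon\,\rIm(K)\neq0$, and $\tr(W_\epsilon X)=\|X\|_{\tr}-\epsilon\left(\|X\|_{\tr}-\tr(KX)\right)\to\|X\|_{\tr}$ as $\epsilon\to0$.
\end{itemize}

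\textbf{Main obstacle.} The delicate point is the last case. There the unique maximizer ${\rm sign}(X)$ is real, even though $X$ is nonreal; for example, $X=I+\frac{1}{10}\sigma_y$ has ${\rm sign}(X)=I$. In that situation the witness "max" is only a supremum, because the set of nonreal Hermitian contractions is not closed. I would therefore read the left-hand maximum as a supremum, or equivalently as a maximum over the closure. The perturbation $W_\epsilon$ shows that this supremum equals $\|X\|_{\tr}$, and this is all that is needed for the subsequent use of the lemma. When $X=\rho-\rho^T$ or $X=\rIm(\rho)$, which is the case relevant for $\mI_W$, $X$ is antisymmetric up to a factor of ${\rm i}$. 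Then ${\rm sign}(X)$ is automatically purely imaginary and nonzero, so it is a genuine witness and the maximum is attained.
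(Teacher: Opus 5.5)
Your proof is correct, and it is more careful than the paper's.

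\textbf{Comparison with the paper.} Your Step~1 is exactly the paper's argument: the H\"older bound, with attainment by ${\rm sgn}(X)=\sum_i{\rm sgn}(\lambda_i)|v_i\ra\la v_i|$. The difference lies in the first equality. The paper sets $W_0=H_0={\rm sgn}(X)$ and never checks that $W_0$ is nonreal, i.e.\ that it is an admissible witness at all.

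Your example $X=I+\frac{1}{10}\sigma_y$ shows that this check can fail. Since $X>0$, $\tr(HX)=\tr X$ together with $I-H\geq0$ forces $H=I$, which is real and detects nothing. So the lemma as literally stated is false there, because the left-hand maximum is not attained. The paper's proof therefore has a genuine gap at exactly this step.

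Your perturbation $W_\epsilon$ proves the correct supremum version for every nonreal Hermitian $X$. Your closing remark also shows that the application in Sec.~\ref{sec8} is unaffected. At the optimal $\delta=\rRe(\rho)$ the operator is $X=\frac12(\rho-\rho^T)$. Its sign is purely imaginary and, for imaginary $\rho$, nonzero, so it is a stringent witness that attains the maximum. In short, the paper's one-line route is valid only when ${\rm sgn}(X)$ happens to be nonreal. Yours proves a true statement in general and pinpoints where attainment fails.

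\textbf{Points to tighten.} Your middle case is circular as written. Here is the precise version, reading witnesses as nonreal Hermitian operators as you and the paper do.
If $\la e|H|e\ra=\pm1$ and $\|H\|\leq1$, then $He=\pm e$, so the maximizers are exactly $P_+-P_-+K_0$.
If $P_+-P_-$ is real, then so is $I-(P_+-P_-)^2$, the projection onto $\ker X$. Hence $\ker X$ has a real orthonormal basis.
A nonreal $K_0$ such as ${\rm i}(|u_1\ra\la u_2|-|u_2\ra\la u_1|)$ therefore exists iff $\dim\ker X\geq2$.
So the maximum is attained iff ${\rm sgn}(X)$ is nonreal or $\dim\ker X\geq2$. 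When $\dim\ker X=1$ the maximizers form a segment of real operators, so ``unique'' should read ``all real''.

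Also, $\rIm(\rho)$ is real antisymmetric, not Hermitian. The Hermitian operator to which the lemma applies is ${\rm i}\,\rIm(\rho)$.

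Finally, equating witnesses with nonreal Hermitian operators is the paper's convention, but it does not follow from Theorem~\ref{th1}, which is only an existence result. For $d\geq4$ it is false. Take $W=|1\ra\la1|-|4\ra\la4|+\frac{\rm i}{2}(|3\ra\la2|-|2\ra\la3|)$. It is nonreal, but $\rRe(W)={\rm diag}(1,0,0,-1)$ and its spectrum is $\{1,\pm\frac12,-1\}$, so $\tr(W\rho)\in[m,M]$ for every state and it detects nothing.

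Under Definition~\ref{D1} you should therefore also verify that $W_\epsilon$ detects a state, and it does. In the remaining case, the extreme eigenspaces of the real $H_0$ span a subspace of codimension at most one. So $K=2W_{k,l}$, which vanishes only on a codimension-two subspace, cannot vanish on it. A first- or second-order eigenvalue perturbation then gives one of two outcomes. Either $\lambda_{\max}(W_\epsilon)$ rises above $(1-\epsilon)\lambda_{\max}(H_0)=\lambda_{\max}(\rRe W_\epsilon)$, or $\lambda_{\min}(W_\epsilon)$ falls below $(1-\epsilon)\lambda_{\min}(H_0)$. The first-order step uses that the compression of $K$ to a real eigenspace is ${\rm i}$ times a real antisymmetric matrix. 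Such a matrix has a positive eigenvalue whenever it is nonzero.
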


\begin{proof}
For any nonreal Hermitian $X$ and any $H\in\mathscr{H}$, obviously, $\tr(WX)\leq\|X\|_{\tr}$ whenever $\|W\|\leq1$ and $\tr(HX)\leq\|X\|_{\tr}$ whenever $\|H\|\leq1$. We let the spectral decomposition of $X$ be $X=\sum_{i}\lambda_{i}|v_i\ra\la v_i|$. Taking 
\beax W_0=\sum_{i}{\rm sgn}(\lambda_{i})|v_{i}\ra\la v_{i}|,
\eeax  
$\tr(W_0X)=\sum_{i}|\lambda_{i}|=\|X\|_{\tr}$; taking 
\beax 
H_0=\sum_{i}{\rm sgn}(\lambda_{i})|v_{i}\ra\la v_{i}|,
\eeax $\tr(H_0X)=\|X\|_{\tr}$. The proof is completed.	
\end{proof}	   

\begin{theorem}
For any $\rho\in\mS$, 
\be 
\mI_W(\rho)=\mI_{tr}(\rho)=\mI_{R}(\rho).
\ee
\end{theorem}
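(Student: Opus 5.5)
The plan is to show the chain $\mI_R(\rho)\le\mI_W(\rho)\le\mI_R(\rho)$ and then identify $\mI_R$ with $\mI_{tr}$ directly. The last identification is immediate from the formulas in Sec.~\ref{sec2}: since $\rho-\rho^T=2{\rm i}\,\rIm(\rho)$, we get $\mI_{tr}(\rho)=\frac12\|\rho-\rho^T\|_{\tr}=\|\rIm(\rho)\|_{\tr}=\mI_R(\rho)$. So the work is to prove $\mI_W(\rho)=\|\rIm(\rho)\|_{\tr}$, and the first step is a decomposition of $\tr(W\rho)$.

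\textbf{Decomposition of $\tr(W\rho)$.} Write $W=\rRe(W)+{\rm i}\,\rIm(W)$ and $\rho=\rRe(\rho)+{\rm i}\,\rIm(\rho)$. Here $\rRe(\cdot)$ is real symmetric, and $\rIm(\cdot)$ is Hermitian with purely imaginary entries, i.e.\ ${\rm i}$ times a real antisymmetric matrix. The trace of a symmetric matrix times an antisymmetric one vanishes, so the cross terms drop out and
\[
\tr(W\rho)=\tr(\rRe(W)\rRe(\rho))+\tr(\rIm(W)\rIm(\rho)).
\]
(The factor ${\rm i}^2$ is absorbed by the Hermitian form of $\rIm$; I will check this sign in the write-up, but only absolute values matter below.) Since $\rRe(\rho)=(\rho+\rho^T)/2$ is itself a real state, Lemma 1 gives $\tr(\rRe(W)\rRe(\rho))\in[m,M]$.

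\textbf{Upper bound.} Because $m\le M$, at most one of the two terms defining $I_W(\rho)$ is positive. Each of them is bounded by $|\tr(\rIm(W)\rIm(\rho))|$. By H\"older's inequality this is at most $\|\rIm(W)\|\,\|\rIm(\rho)\|_{\tr}$. Moreover $\|\rIm(W)\|=\frac12\|W-W^T\|\le 1$, because $\|W^T\|=\|W\|\le1$. Hence $\mI_W(\rho)\le\|\rIm(\rho)\|_{\tr}$.

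\textbf{Lower bound.} If $\rho$ is real, both sides vanish. Otherwise set $X=\rIm(\rho)$, a nonzero nonreal Hermitian matrix. Following Lemma~\ref{L6}, take $W_0=-\sum_i{\rm sgn}(\lambda_i)|v_i\ra\la v_i|$, the negative of the sign function of $X$, so that $\|W_0\|\le1$ and $\tr(W_0X)=-\|X\|_{\tr}$.

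The key observation is that $X^T=-X$. Hence ${\rm sgn}(X)^T={\rm sgn}(X^T)=-{\rm sgn}(X)$, so $W_0$ is purely imaginary: $\rRe(W_0)=0$ and therefore $m=M=0$. By the decomposition, $\tr(W_0\rho)=\tr(W_0X)=-\|\rIm(\rho)\|_{\tr}<0$. Thus $W_0$ is a genuine (stringent) imaginarity witness in $\bb{W}_{0,0}$ with $I_{W_0}(\rho)=\|\rIm(\rho)\|_{\tr}$, which gives the reverse inequality.

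\textbf{Main obstacle.} The delicate step is making sure the optimizer is an admissible witness. One must verify that the spectral sign function of $\rIm(\rho)$ has vanishing real part, which is what the transpose symmetry $X^T=-X$ provides. Then its $m,M$ collapse to $0$ and the penalty term from the real part disappears. The sign bookkeeping in the decomposition is the other place I would check carefully.
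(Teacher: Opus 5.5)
Your argument is correct, apart from one notational slip discussed below, and it takes a genuinely different route from the paper.

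\textbf{Comparison with the paper.} The paper writes $I_W(\rho)=\tr(W\rho)-M_W=\min_{\delta\in\mR}\tr[W(\rho-\delta)]$ on the upper tail. It then exchanges $\max_{\|H\|\leq1}$ and $\min_{\delta\in\mR}$ by von Neumann's minimax theorem and evaluates the inner maximum with Lemma~\ref{L6}, reaching $\min_{\delta\in\mR}\|\rho-\delta\|_{\tr}=\mI_{tr}(\rho)$. Finally it cites $\mI_{tr}=\mI_R$.

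You instead split operators into symmetric and antisymmetric parts. Lemma~1, applied to the real state $\rRe(\rho)$, pins the symmetric contribution to $[m,M]$, so $I_W(\rho)$ is controlled by the antisymmetric cross term alone. H\"older bounds that term for every Hermitian $W$ in the unit ball, on both tails at once. Minus the sign function of the antisymmetric part of $\rho$ then saturates the bound.

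What your route buys:
\begin{itemize}
\item It is elementary (no minimax theorem) and lands directly on $\|\rIm(\rho)\|_{\tr}=\mI_R(\rho)$.
\item It gives an explicit optimizer that you check is admissible. That optimizer lies in $\bb{W}_{0,0}$, so the maximum is already attained by a stringent witness.
\item It avoids two loose steps in the paper. The paper equates $I_W(\rho)$ with $\min_{\delta}\tr[W(\rho-\delta)]$ for all $W$, although this holds only when $\tr(W\rho)>M_W$. It also justifies admissibility of the maximizer only by the remark that every nonreal Hermitian operator is a witness. That remark fails in general: for $d=4$, ${\rm diag}(1,-1)\oplus\epsilon\sigma_y$ with $0<\epsilon\leq1$ has its whole spectrum inside $[m,M]=[-1,1]$.
\end{itemize}
What the paper's route buys is generality. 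It is the standard duality ``witnessed resource $=$ distance to the free set,'' which carries over to convex resource theories that lack an orthogonal splitting like yours.

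\textbf{The slip to fix.} With the paper's convention, $\rIm(\rho)=(\rho-\rho^T)/(2{\rm i})$ is a \emph{real antisymmetric}, hence skew-Hermitian, matrix with purely imaginary eigenvalues. It is not a Hermitian matrix with imaginary entries. So ${\rm sgn}(\rIm(\rho))$, as you use it in the lower bound, is not defined.

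Apply your construction to $X:={\rm i}\,\rIm(\rho)=(\rho-\rho^T)/2$ instead. This $X$ is Hermitian, satisfies $X^T=-X$, and has $\|X\|_{\tr}=\|\rIm(\rho)\|_{\tr}$. The decomposition then reads $\tr(W\rho)=\tr(\rRe(W)\rRe(\rho))+\tr\bigl(\tfrac{W-W^T}{2}\,X\bigr)$. In the paper's notation the cross term equals $-\tr(\rIm(W)\rIm(\rho))$, which settles the sign you flagged.

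With $W_0=-{\rm sgn}(X)$, every remaining step goes through verbatim, including $\rRe(W_0)=0$ and $\tr(W_0\rho)=-\|X\|_{\tr}$. The upper bound is unaffected, since H\"older's inequality holds for non-Hermitian matrices as well.
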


\begin{proof}
If $W\in\bb{W}_{m,M}$ for some $m_W$ and $M_W$, we know $M_W=\max_{\delta\in\mR}\tr(W\delta)$ and $m_W=\min_{\delta\in\mR}\tr(W\delta)$.
If $\tr(W\rho)>M_W$, then 
\bea 
I_W(\rho)=\tr(W\rho)-M_W=\min_{\delta\in\mR}\tr[W(\rho-\delta)].
\eea  
Since the sets of all Hermitian operators $H$'s with $\|H\|\leq1$ and $\mR$ are convex and compact, by von Neumann's minimax theorem~\cite{{Nikaid},{Sion}}, we get 
\bea \max_{\|H\|\leq1}\min_{\delta\in\mR}\tr(H\delta)=\min_{\delta\in\mR}\max_{\|H\| \leq1}\tr(H\delta).
\eea  
By Lemma~\ref{L6}, $\max_{\|W\| \leq1}\tr[W(\rho-\delta)]=\max_{\|H\| \leq1}\tr[H(\rho-\delta)]=\|\rho-\delta\|_{\tr}$, and together with the fact that any nonreal Hermitan $X$ is an imaginarity witness, we have
\beax 
\mI_W(\rho)&=&\max_{\|W\|\leq1}I_W(\rho)\\
&=&\max_{\|W\|\leq1}\min_{\delta\in\mR}\tr[W(\rho-\delta)]\\
&=&\max_{\|H\|\leq1}\min_{\delta\in\mR}\tr[H(\rho-\delta)]\\
&=&\min_{\delta\in\mR}\max_{\|H\|\leq1}\tr[H(\rho-\delta)]\\
&=&\min_{\delta\in\mR}\|\rho-\delta\|_{\tr}=\mI_{tr}(\rho).
\eeax
Similarly, if $\tr(W\rho)<m_W$, we also can obtain $\mI_W(\rho)=\mI_{tr}(\rho)$. Together with the fact $\mI_{tr}(\rho)=\mI_{R}(\rho)$ in Ref.~\cite{WuKD2021pra}, we finish the proof. 
\end{proof}


\section{Conclusion and discussion}\label{sec9}


In this work, we have proposed a new type of imaginarity witnesses, in which the expect value of all the real states are lie in the interval that determined by its minimum eigenvalue and maximum eigenvalue of the real part of the given witness. Any state with the expect value is not included in this interval reveals the imaginarity contained in the state. In such a sense, any nonreal Hermitian operator under the given reference basis is a imaginarity witness. On the other hand, only a finite number of such imaginarity witnesses can detect all the imaginarity states. This type of imaginarity witnesses outperform the previous ones and cover the stringent imaginarity witnesses as a special case.

As the witness theory for other resource such as the entanglement and coherence, we also explored different relations between or among these imaginarity witnesses, which including, when different witnesses can detect common imaginarity states or the same imaginarity states and the finer relation of witnesses. Consequently, we found that, the conclusions for these imaginarity witnesses are a little different from that of entanglement witnesses and the coherence witnesses.

Finally, we investigated the concept of witnessed imaginarity, establishing it as the natural counterpart to both witnessed entanglement and witnessed coherence. Notably, our analysis demonstrates that witnessed imaginarity is equivalent to both the trace norm of imaginarity and the robustness of imaginarity. Building upon existing research regarding imaginarity witnesses in the literature, these findings collectively constitute a comprehensive theoretical framework for understanding imaginarity witnesses.

\begin{acknowledgements}
This work is supported by the National Natural Science Foundation of China under Grant Nos.~12471434 and 11971277, the Program for Young Talents of Science and Technology in Universities of Inner Mongolia Autonomous Region under Grant No. NJYT25010, and the High-Level Talent Research Start-up Fund of Inner Mongolia University under Grant No. 10000-A260015/501.
\end{acknowledgements}
	

	
	

\end{document}